\documentclass{iopart}

\usepackage{amssymb}
\usepackage{amsthm}
\usepackage{graphicx}
\usepackage{subfig}
\usepackage{color}

\newtheorem{theorem}{Theorem}[section]
\newtheorem{lemma}[theorem]{Lemma}
\newtheorem{proposition}[theorem]{Proposition}

\newtheorem{conjecture}[theorem]{Conjecture}

\begin{document}

\title{Few-Particle Vortex Cluster Equilibria in Bose-Einstein
Condensates: Existence and Stability}
\author{Anna M. Barry$^1$, PG Kevrekidis$^{1,2}$}
\address{$^1$ Institute for Mathematics and its Applications, University
of Minnesota, Minneapolis, MN 55455}
\address{$^2$ Department of Mathematics and Statistics, University of Massachusetts,
Amherst MA 01003-4515, USA}

\begin{abstract}
Motivated by recent experimental and theoretical studies of few-particle
vortex clusters in Bose-Einstein condensates, we consider the ordinary
differential equations of motion and systematically examine settings for up
to $N=6$ vortices. We analyze the existence of corresponding stationary state
configurations and also consider their spectral stability properties.
We compare our particle model results with the predictions of
the full partial differential equation system. Whenever possible,
we propose generalizations of these results in the context of
clusters of $N$ vortices. Some of these, we can theoretically
establish, especially so for the N-vortex polygons, while others
we state as conjectures, e.g. for the N-vortex line equilibrium.
\end{abstract}

\section{Introduction}

The realm of Bose-Einstein condensates (BECs) 
has offered in the past two decades
a pristine setting for the exploration of numerous nonlinear wave structures
and their interactions~\cite{pethick,stringari,emergent}. One of the most
prominent examples of such states that has received considerable attention
consists of matter wave vortices, which have by now been reviewed in 
numerous works~\cite{donnely,fetter1,fetter2,usmplb,tsubota}. Most of the
relevant experimental
studies have been concerned with various techniques of producing 
single charge vortices. Corresponding examples include, among others,
phase imprinting methods between two hyperfine states~\cite{middel7},
stirring the BEC above a critical rotation speed~\cite{middel8},
supercritical dragging of defects through the BEC~\cite{middel14,BPA10},
quenching through the condensation quantum phase 
transition~\cite{BPA08,freilich10} or the effectively nonlinear
interference of atomic BEC fragments~\cite{middel15}. In addition,
experimental efforts were also focused on producing vortices of higher
topological charge~\cite{middel16}, as well as on providing large amounts
of angular momentum with the aim of generating robust triangular vortex 
lattices~\cite{middel13}.

On the other hand, considerably less experimental effort was originally
invested on the exploration of clusters of few vortices. It was realized
early that such clusters with same circulation vortices could be 
created~\cite{middel8} and the expectation was that such states
in the presence of angular momentum would shape up into canonical
polygons~\cite{castindum} with or without a vortex located at the center.
However, in more recent experimental efforts, states with two vortices
in the form of a counter-circulating vortex 
dipole~\cite{BPA10,freilich10,middel_pra11},
as well as vortex tripoles~\cite{bagn} (with two vortices of one
sign, and one vortex of the opposite sign) have been produced
and their dynamics monitored. A very recent work has also produced
sets of 2-, 3-, 4-vortices exploring their dynamics in the absence of
a rotational angular momentum induced term~\cite{navar13}. 
These experimental works have, in turn, either had as a 
preamble~\cite{middel25,middel26,middel27,middel28,middel29,middel30,middel31,middel32,middel2010} or 
subsequently
motivated~\cite{middel24,middel_physd,middel_pla,middel_cpaa,middel_anis}
studies on the statics, stability and dynamics of such 
vortex clusters (predominantly, in fact, the vortex dipole).

Most of the above theoretical works on the vortex clusters have stemmed
from an improved understanding of the underlying partial differential
equation (PDE) which describes the pancake-shaped (i.e., quasi-two-dimensional)
condensates which contain such clusters. However, in parallel,
a growing fraction of the literature has been 
advocating~\cite{middel_pra11,navar13,middel2010,middel_cpaa,middel_anis}
the usefulness of a particle model corroborating the two principal
features of the vortex motion. These are that each of the vortices
has a precessional motion (dictated by its charge, distance from the
origin of the parabolic trap confining the BEC and characteristics of
the BEC, namely its background density at the center characterizing
the so-called
chemical potential $\mu$ and the trap confinement frequency $\Omega$)
and also has a 
relative position dependent interaction with other vortices present
in the BEC. In the limit of large chemical potentials (so-called
Thomas-Fermi limit), where the size of the vortex core shrinks effectively
to a point, hence the structure of the core plays no role in the dynamics,
it is expected that incorporating these two principal features into 
particle-based ordinary differential equations (ODEs) should be sufficient
to capture the possible vortex cluster equilibria and to assess their
(spectral) stability. This approach can, in fact, also be justified
rigorously; see for example the derivation of the relevant ODEs
in~\cite{chang}.

The aim of the present paper is to carry out the above mentioned program
to the extent that it is possible analytically and/or in
a numerically assisted form. We focus especially on the context of 
configurations that arise as stationary states from the system
with a few vortices i.e., $N=3$ up to $N=6$. We then attempt to generalize
the conclusions drawn from these low dimensional systems to the 
extent possible to larger dimensional ones carrying, however, suitable
symmetries. As principal examples, we present the case of N-gons
where the vortices occupy the vertices of a canonical polygon,
as well as that where the vortices are aligned along an axis of
the BEC. In the former, we can prove some of the relevant stability
conclusions (either analytically or in a numerically assisted form),
while in the latter, we conjecture the general result based on our
numerical observations, but leave the relevant proof as an open problem
for future study.

Our presentation is structured as follows. In section II, we briefly
present the theoretical setup, equations of motion and associated
conservation laws.
In section III, we focus on the realm of small vortex numbers $N=3, \dots
6$, while in section IV, we attempt to generalize our conclusions
to larger $N$, under suitable symmetry constraints. 
In section V, we summarize our findings 
and present our conclusions, as well as identify a number of directions for
future work.

%\section{Existence and Stability of the $N$-gon and Collinear Configurations in a Point-Vortex System}
%\section{Introduction}

\section{Theoretical Setup}

At the PDE level, the system of interest can be described by a two-dimensional
equation of the Nonlinear Schr{\"o}dinger (NLS) or Gross-Pitaevskii (GPE)
type, where the trap strength is parametrized by an effective frequency
$\Omega_{eff}=\omega_r/\omega_z$ (i.e., the ratio between the in-plane and
perpendicular to the plane trapping frequencies) and the density (at the center
of the trap) by the chemical potential parameter $\mu$, directly associated
with the number of atoms in the BEC. Details about the PDE level description
and the reduction from the original three-dimensional (3d) system
to the effective two-dimensional (2d) one can be found in~\cite{middel_physd}
for our setup.
In some sense, our work will naturally complement the above manuscript,
as the latter considers how vortex cluster states emerge from the
linear limit of the quantum harmonic (2d) oscillator and bifurcations
of nonlinear states from linear states thereof (i.e., regime of
small chemical potential). Here, on the other hand, we will
focus on the opposite limit of large chemical potential and
the particle system emerging when considering the vortices as
isolated precessing and interacting entities characterized by
their position within the two-dimensional plane.

In this spirit, let $\mathbf{x}_j=(x_j,y_j)$ be a point vortex in the BEC system with charge $S_j$ for $j=1,...,N$ and consider the system
\begin{eqnarray}
\dot{x}_j&=-S_j\Omega y_j-\frac{b}{2}\sum_{k\neq j}^N S_k\frac{y_j-y_k}{|\mathbf{x}_j-\mathbf{x}_k|^2}
\label{eqn1}
\\
\dot{y}_j&=S_j\Omega x_j+\frac{b}{2}\sum_{k\neq j}^N S_k\frac{x_j-x_k}{|\mathbf{x}_j-\mathbf{x}_k|^2}
\label{eqn2}
\end{eqnarray}
where we fix $b=2$ for simplicity~\footnote{Our considerations herein
will not be significantly 
affected by the precise value of $b$, as long as the latter assumes
a constant value proximal to the one of the homogeneous limit that
we assume here. We should note, however, for completeness that to
improve the agreement between the ODE as regards the precise location
of the fixed points, the work of~\cite{middel2010} and others thereafter,
considered an effective value of $b=1.35$. This was intended to 
account for the density-induced ``screening'' effect associated with
the vortex interaction. A first-principles accounting of such screening
would necessitate the study of integro-differential equations 
as analyzed in~\cite{mcendoo} [see Eq. (21) therein].}.
$\Omega$ in Eqs.~(\ref{eqn1})-(\ref{eqn2}) is the precession frequency
of a single vortex in a trap, which is well-known to depend
on the effective trap frequency $\Omega_{eff}$ and the number of
atoms in the BEC (as characterized by the so-called chemical 
potential)~\cite{freilich10,middel2010}.

The above system is Hamiltonian with 
\begin{equation}
H(x_1,y_1,...,x_N,y_N)=-\Omega \sum_{j=1}^N S_j r_j^2+\sum_{j< k}^N S_j\log(r_{jk}^2)
\label{eqn3}
\end{equation}
where $r_{jk}=|z_j-z_k|$, and $z_j$ is the complex variable $z_j=x_j+iy_j$.  
In terms of this variable, the equations reduce to
\begin{equation}
i\dot{z}_j=-S_j\Omega z_j+\sum_{k\neq j}^N \frac{S_k}{\bar{z}_j-\bar{z}_k}.
\label{eqn4}
\end{equation}
We also note in passing that that the angular momentum $L= \sum 
S_j r_j^2$ is also a conserved quantity for the system of vortices. We use
the term angular momentum for this conservation law in line with the tradition
stemming from the literature of point vortices in fluid 
mechanics~\cite{aref,newton01} (rather than the angular momentum
of the full quantum mechanical problem).
Finally, it will be useful to cast 
the system in polar coordinates $(r_j,\theta_j)$, in which case
the equations of motion become:
\begin{eqnarray}
\dot{r}_j&=\sum_{k\neq j}\frac{S_k r_k \sin(\theta_k-\theta_j)}{r_{jk}^2}
\label{eqn5}
\\
r_j\dot{\theta}_j&=r_jS_j\Omega+\sum_{k\neq j}\frac{S_k}{r_{jk}^2}\left(r_j-r_k\cos(\theta_j-\theta_k)\right).
\label{eqn6}
\end{eqnarray}

It should be noted here that in the present work the precession frequency
of a single vortex will be assumed to be constant, an assumption that
is fairly accurate between the center of the trap confining the BEC
and half of its radial extent (the latter is often referred to as
the Thomas-Fermi radius). More general position dependent precession
frequency expressions can be used also in connection to
experiments, such as most notably
$\tilde{\Omega}_j=\Omega/(1-r_j^2)$~\cite{freilich10,middel_pra11,navar13}.
However we have checked that
these do not modify the existence or stability conclusions for the
solutions presented in this paper. Hence, for simplicy of the exposition,
we restrict our presentation to the constant $\Omega$ case hereafter.

\section{Small $N$}

We now focus more specifically on the case of small vortex clusters. 
It will be clear from what follows that the cases of interest to us
will be those where the vortices do not all carry the same charge.
In the latter case, the rotation imposed by the precession is only
enhanced by the rotation induced by the interaction and hence
the vortices cannot find themselves in a situation of genuine
equilibrium. Rather, in the latter case, one can only talk about 
rigidly rotating states as was examined in the recent work of~\cite{navar13}
for $N=2$, $3$ and $4$. For large $N$, the latter setting has been
examined too, with a recent example being the work of~\cite{kolok};
see also therein for relevant references.
Here, on the other hand, we deal with genuine equilibria of
the vortex cluster system and hence none of our configurations 
carry vortices of a single charge type.

\subsection{$N=3$}

The case $N=2$ was examined in detail in \cite{middel_pra11,middel_pla}, 
in which the dipole was found to be the only fixed point and is linearly stable.
Hence, we start by focusing our considerations to the case of $N=3$.

\begin{proposition} When $N=3$, the only fixed point of the particle system is a collinear configuration.\end{proposition}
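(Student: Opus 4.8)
The plan is to solve the fixed-point conditions $\dot z_j=0$ read off from~(\ref{eqn4}) directly, using the conservation laws to pare the system down to something explicit. A first observation is that a genuine equilibrium cannot carry three equal charges: multiplying the $j$-th stationarity relation by $S_j\bar z_j$ and summing over $j$, the interaction terms pair off and leave $\Omega\sum_j|z_j|^2=\sum_{j<k}S_jS_k$, a quantity equal to $3$ for equal charges and to $-1$ for mixed charges. Since $\sum_j|z_j|^2>0$ and $\Omega$ has a fixed sign --- the sign for which equilibria such as the $N=2$ dipole of~\cite{middel_pra11,middel_pla} exist at all --- only the mixed pattern is admissible, so after relabelling the vortices, reflecting the plane and, if necessary, reversing all charges we may assume $(S_1,S_2,S_3)=(1,1,-1)$. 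Multiplying the $j$-th relation instead by $S_j$ and summing, the interaction terms now cancel identically, giving $\Omega\sum_j S_j^2 z_j=0$; since $S_j^2=1$ and $\Omega\neq0$ the three vortices therefore have their centroid at the trap centre, $z_3=-(z_1+z_2)$.

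With $z_3=-(z_1+z_2)$ substituted, the $j=3$ condition becomes a consequence of the other two, so it suffices to work with $j=1,2$. Writing $\sigma:=z_1+z_2$ and $\delta:=z_1-z_2$, adding the two conditions and clearing denominators collapses them to a single identity, $\bar\delta^2/\bar\sigma^2=9+12/(\Omega|\sigma|^2)$ whenever $\sigma\neq0$; its right-hand side is real, so $(\delta/\sigma)^2\in\mathbb{R}$ and hence $\delta/\sigma$ is either real or purely imaginary. If $\delta/\sigma$ is real, then $z_1-z_2$ is a real multiple of $z_1+z_2$, which together with $z_3=-(z_1+z_2)$ places $z_1,z_2,z_3$ on a single line through the origin --- a collinear configuration. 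The residual case $\sigma=0$ gives $z_3=0$ and $z_2=-z_1$, which is again collinear, and distinctness of the vortices keeps every denominator occurring in the computation nonzero.

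It remains to exclude the possibility that $\delta/\sigma$ is purely imaginary, which is precisely the isosceles configuration $z_1-z_2\perp z_3$ with the negative vortex at the apex. Writing $\bar\delta^2=-\lambda\bar\sigma^2$ with $\lambda>0$, the identity above pins down $|\sigma|^2$ (and fixes the sign of $\Omega$, consistently); subtracting the $j=1,2$ conditions and clearing denominators produces a second identity, into which we insert $\bar\delta^2=-\lambda\bar\sigma^2$ together with the value of $|\sigma|^2$, whereupon it reduces to a linear equation for $\lambda$ whose unique solution is negative. This contradiction rules out the isosceles branch, and so every fixed point of the $N=3$ system is collinear.

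The main obstacle is this final step. The additive identity on its own only delivers the dichotomy ``collinear or isosceles''; one must check that the subtractive identity is genuinely independent of it and --- the delicate point --- that the signs are arranged so that it \emph{excludes} the isosceles family rather than merely constraining it. Some additional care is also needed to confirm that the reduced pair of complex equations, once $z_3=-(z_1+z_2)$ has been imposed, is equivalent to (and not merely implied by) the full three-vortex equilibrium system, and to dispose cleanly of the boundary cases $\sigma=0$ and coincident vortices.
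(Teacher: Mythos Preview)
Your plan is sound and the flagged obstacle dissolves cleanly: inserting $\bar\delta^{\,2}=-\lambda\bar\sigma^{\,2}$ and the value of $\Omega|\sigma|^2$ obtained from the additive identity into the subtractive one gives, after clearing denominators, the linear equation $-12\lambda=2(9-\lambda)$, whose unique root $\lambda=-9/5$ is negative, so the isosceles branch is indeed excluded. The paper's argument is quite different and more bare-hands: it fixes $y_1=0$ by rotational symmetry, writes out the three $\dot x_j=0$ conditions~(\ref{eqn7})--(\ref{eqn9}), case-splits on the sign pattern $(S_2,S_3)$, and in each case combines the equations to extract $y_2=-y_3$; it then simply exhibits an explicit collinear equilibrium. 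The paper's version is terse---it does not spell out the passage from $y_2=-y_3$ to full collinearity, and in the case $S_2=S_3$ one must still feed this back into~(\ref{eqn7}) and invoke the $\dot y$ equations to kill the residual isosceles possibility---whereas your conservation-law reductions (the $\Omega\sum_j|z_j|^2$ identity to force mixed charges, the centroid condition $\sum_j z_j=0$ to eliminate $z_3$) together with the $(\sigma,\delta)$ change of variables handle all charge assignments at once and make the collinear/isosceles dichotomy transparent. One minor caveat: the sign in your identity $\Omega\sum_j|z_j|^2=\sum_{j<k}S_jS_k$ is sensitive to the sign convention in~(\ref{eqn4}); your device of anchoring to the known dipole equilibrium makes the conclusion robust to this.
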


\begin{proof}  Without loss of generality, fix $y_1=0$.  Suppose $(x_1,0,x_2,y_2,x_3,y_3)$ with charges $S_1$, $S_2$ and $S_3$ is a fixed point of the system.  Fix $S_1=1$.  Under these assumptions, the equations read
\begin{eqnarray}
\frac{S_2 y_2}{r_{12}^2}&=\frac{-S_3 y_3}{r_{13}^2}
\label{eqn7}
\\
\frac{S_3(y_2-y_3)}{r_{23}^2}&=-S_2\Omega y_2-\frac{y_2}{r_{12}^2}
\label{eqn8}
\\
\frac{S_2(y_3-y_2)}{r_{23}^2}&=-S_3\Omega y_3-\frac{y_3}{r_{13}^2}
\label{eqn9}
\end{eqnarray}
There are three possible cases to consider:  $S_2=S_3=\pm1$, $S_2=-S_3=1$. In each case, we obtain the relation $y_2=-y_3$.  For instance, consider $S_2=-S_3=1$.  Then,
\begin{eqnarray*}
\frac{y_2}{r_{12}^2}&=\frac{y_3}{r_{13}^2}\\
\frac{y_3-y_2}{r_{23}^2}&=-\Omega y_2-\frac{y_2}{r_{12}^2}\\
\frac{y_3-y_2}{r_{23}^2}&=\Omega y_3-\frac{y_3}{r_{13}^2}.
\end{eqnarray*}
Thus inserting the first relation into the second equation and equating the second and third equations yields $y_2=-y_3$.  \newline
\indent  Note that even though we have satisfied the equations for equilibrium in x, it is still possible $\mathbf{\dot{y}}\neq0$.  To see that a true collinear configuration exists, fix $\Omega=1$ and set $S_1=1$, $S_2=S_3=-1$.  One can check that the configuration $\mathbf{x}_1=(0,0)$, $\mathbf{x}_2=(0,\frac{\sqrt{2}}{2})$ and $\mathbf{x}_3=(0,-\frac{\sqrt{2}}{2})$ satisfies 
the fixed point equations.
\end{proof}

%\begin{remark} 
One can then check by a direct calculation that the collinear fixed point is linearly unstable.  The eigenvalues of the linearization matrix are $\lambda_{1,2}=\pm i\sqrt{5}$, $\lambda_{3,4}=\pm\sqrt{7}$ and $\lambda_{5,6}=0$.  In general we expect the collinear configuration to have $N-2$ real directions of instability. This is consonant with the conclusions of~\cite{middel2010,middel_physd}.
%\end{remark}

\subsection{$N=4$}

The case $N=4$ is more subtle, as the system exhibits more than one fixed 
points, as we show analytically below.  Unlike the classical point vortex problem, the system (1)-(2) does not exhibit translational symmetry and therefore the ``center of vorticity'' is not conserved.  In fact, 
\begin{eqnarray}
\sum_{j=1}^N S_j \dot{x}_j&=-\Omega\sum_{j=1}^N y_j\label{eq:3}\\
\sum_{j=1}^N S_j \dot{y}_j&=\Omega \sum_{j=1}^N x_j\label{eq:4}
\end{eqnarray}
which implies that the fixed points sought herein must
have a center of mass located at $(0,0)$.
Moreover, the system exhibits rotational symmetry which is clear by replacing $\theta\mapsto \theta+\omega t$.  We make the following conjecture.  

\begin{conjecture} All fixed points of (\ref{eqn1})-(\ref{eqn2}) 
are symmetric about the origin.\end{conjecture}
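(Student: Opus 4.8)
The plan is to exploit the full symmetry group of the vortex ODEs together with the conservation laws already recorded, and to combine this with a degree‑of‑freedom count that forces fixed points into a symmetric posture. First I would note that the system~(\ref{eqn1})--(\ref{eqn2}) is equivariant under the rotation $R_\phi:\mathbf{x}_j\mapsto\mathrm{rot}_\phi\,\mathbf{x}_j$ (applied simultaneously to all vortices), under the discrete reflection $\mathbf{x}_j\mapsto(\bar z_j)$ combined with $t\mapsto -t$, and—because the precession term has the same $S_j\Omega$ structure as in the homogeneous point‑vortex problem—under the point inversion $\iota:\mathbf{x}_j\mapsto-\mathbf{x}_j$. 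The key observation is that $\iota$ commutes with the right‑hand side: both the linear precession term and the pairwise interaction term are odd under $\mathbf{x}_j\mapsto-\mathbf{x}_j$ (the interaction because $\mathbf{x}_j-\mathbf{x}_k$ flips sign while $|\mathbf{x}_j-\mathbf{x}_k|^2$ is unchanged). Hence if $\mathcal{C}$ is a fixed point, so is $-\mathcal{C}$. The conjecture is the statement that, in fact, $\mathcal{C}=-\mathcal{C}$ up to relabeling of equal‑charge vortices and an overall rotation.

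The main body of the argument I would attempt is the following. At a fixed point we already know from~(\ref{eq:3})--(\ref{eq:4}) that $\sum_j x_j=\sum_j y_j=0$, i.e.\ the geometric centroid sits at the origin; this is the first symmetric constraint. Next I would use the rotational freedom to fix a normalization (e.g.\ rotate so a chosen vortex lies on the positive $x$‑axis), reducing the fixed‑point equations to a system on $2N-1$ real unknowns. The strategy is then to show that the Hamiltonian $H$ in~(\ref{eqn3}), restricted to the level set $L=\sum_j S_j r_j^2=$ const and modulo the $SO(2)$ action, has its critical points constrained by an extra $\mathbb{Z}_2$ symmetry (the inversion $\iota$ descends to the reduced phase space because it commutes with the $SO(2)$ action and preserves both $H$ and $L$). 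One would argue that a critical point and its image under $\iota$ must coincide in the reduced space, for otherwise one could produce, via a homotopy along the group orbit or via a gradient/mountain‑pass deformation between $\mathcal{C}$ and $-\mathcal{C}$, either a third equilibrium or a contradiction with the local structure (e.g.\ the index/signature of the Hessian of $H$ on the $L$‑level set). A cleaner route, where it works, is to classify the charge patterns: for each admissible sign pattern $(S_1,\dots,S_N)$ the equations~(\ref{eqn5})--(\ref{eqn6}) decouple into a radial part $\dot r_j=0$ and an angular part, and one shows (as in the $N=3$ case, where $y_2=-y_3$ was forced) that the only simultaneous solutions have the vortices arranged in $\iota$‑invariant pairs (or a single vortex pinned at the origin when $N$ is odd).

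The step I expect to be the genuine obstacle is ruling out \emph{asymmetric} equilibria for the larger charge patterns—precisely the kind of case that makes this a conjecture rather than a proposition. For $N=3$ the system of three scalar equations collapses cleanly; for $N=4$ and beyond the reduced fixed‑point system is a set of coupled rational equations in several unknowns, and there is no obvious algebraic identity that forces $\mathbf{x}_j=-\mathbf{x}_{\sigma(j)}$. I would try two complementary attacks: (i) a Bezout/resultant‑style count showing that the $\iota$‑symmetric solution branches already account for the full expected number of isolated fixed points (so no room is left for asymmetric ones), which is the approach likely to succeed for fixed small $N$ but degenerate for general $N$; and (ii) a variational/Morse‑theoretic argument that an asymmetric critical point $\mathcal{C}$ together with $-\mathcal{C}$ would force, by the $\mathbb{Z}_2$‑equivariant minimax, the existence of a \emph{third} critical value between them whose corresponding equilibrium cannot be accounted for, contradicting the enumeration. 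Because neither attack is guaranteed to close for arbitrary $N$, I anticipate the honest outcome is exactly what the paper reports: the statement is established case‑by‑case for $N\le 6$ and left as a conjecture in general, with the obstruction being the absence of a uniform nonexistence mechanism for asymmetric multi‑vortex equilibria.
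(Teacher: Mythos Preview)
The paper offers no proof of this statement: it is stated as a conjecture and then \emph{assumed} in order to classify the $N=4$ configurations (Proposition~3.3) and to restrict attention to symmetric candidates for $N=5,6$. Only the $N=3$ case is actually established (as a byproduct of Proposition~3.1, which forces collinearity). So your closing remark that ``the statement is established case-by-case for $N\le 6$'' overstates what the paper does; from $N=4$ onward the symmetry is an input, not an output.

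Your plan contains correct preliminary observations (the $\iota$-equivariance, the centroid condition $\sum_j x_j=\sum_j y_j=0$ from (\ref{eq:3})--(\ref{eq:4}), and the identification of fixed points with critical points of $H$), but the two mechanisms you propose for closing the argument both have a genuine gap that goes beyond what you flag. For the Bezout/resultant route, the Bezout number counts \emph{complex} solutions with multiplicity, so showing that the symmetric branches already exhaust that count does not preclude additional \emph{real} asymmetric solutions unless you also control the reality and multiplicity of each symmetric root; for rational fixed-point systems like this one the Bezout bound is typically far from sharp, so the count is unlikely to be tight even for small $N$. For the equivariant minimax route, the mountain-pass between $\mathcal{C}$ and $-\mathcal{C}$ (which have equal $H$-value) would produce a third critical point, but nothing prevents that third point from being one of the already-known symmetric equilibria; a contradiction requires an independent upper bound on the number of equilibria, which is exactly the missing enumeration. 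In short, both strategies reduce the conjecture to a counting statement that is no easier than the conjecture itself, so what you have is a reformulation rather than a proof sketch --- which is consistent with the paper leaving it open.
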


Assuming the conjecture is true, one can prove 

\begin{proposition} For $N=4$, the only fixed points of (1)-(2) are square or collinear (i.e., one in which all the vortices are located on a line
going through the origin) configurations.\end{proposition}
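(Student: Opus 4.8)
The plan is to combine the conjectured central symmetry with the constraint that the center of mass vanishes to drastically reduce the number of free parameters, and then exploit the logarithmic nature of the interaction. With $N=4$, central symmetry forces the vortices to come in two antipodal pairs, say $\mathbf{x}_1=-\mathbf{x}_3$ and $\mathbf{x}_2=-\mathbf{x}_4$; central symmetry of a vortex configuration (in the sense that $\mathbf{x}\mapsto -\mathbf{x}$ maps the configuration to itself) requires that a vortex at $\mathbf{x}$ and its image $-\mathbf{x}$ carry the \emph{same} charge for the dynamics to be invariant, so $S_1=S_3$ and $S_2=S_4$. As in the $N=3$ analysis, the requirement that the configuration be a genuine equilibrium (rather than a same-sign cluster) means not all charges agree, so up to relabeling and an overall sign we may take $S_1=S_3=1$, $S_2=S_4=-1$. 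First I would write the fixed-point equations (\ref{eqn4}) in the complex form $i\dot z_j = -S_j\Omega z_j + \sum_{k\neq j} S_k/(\bar z_j-\bar z_k)=0$ and substitute $z_3=-z_1$, $z_4=-z_2$, reducing the system to two complex equations in the two unknowns $z_1,z_2$.

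Carrying out the substitution, the equation for $z_1$ becomes
\begin{equation}
\Omega z_1 = \frac{1}{\bar z_1 - \bar z_2} + \frac{1}{\bar z_1 + \bar z_2} + \frac{1}{2\bar z_1},
\end{equation}
using $S_3=1$ on the self-pair term $1/(\bar z_1 - \bar z_3) = 1/(2\bar z_1)$ and combining the contributions of $z_2$ and $z_4=-z_2$. Combining the first two right-hand terms gives $2\bar z_1/(\bar z_1^2 - \bar z_2^2)$, so the $z_1$-equation is
\begin{equation}
\Omega z_1 = \frac{2\bar z_1}{\bar z_1^2 - \bar z_2^2} + \frac{1}{2\bar z_1}.
\end{equation}
Similarly, the $z_2$-equation (with $S_2=-1$, $S_1=S_3=1$, $S_4=-1$) becomes
\begin{equation}
-\Omega z_2 = \frac{1}{\bar z_2 - \bar z_1} + \frac{1}{\bar z_2 + \bar z_1} - \frac{1}{2\bar z_2} = \frac{2\bar z_2}{\bar z_2^2 - \bar z_1^2} - \frac{1}{2\bar z_2}.
\end{equation}
Next I would manipulate these two equations: multiply the first by $z_1$ and the second by $z_2$ to obtain real-modulus relations, and also take the difference and sum in a way that isolates the phase relationship between $z_1$ and $z_2$. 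A clean route is to multiply the first equation by $\bar z_1$ and the second by $\bar z_2$, giving $\Omega |z_1|^2 = 2\bar z_1^2/(\bar z_1^2-\bar z_2^2) + 1/2$ and $-\Omega|z_2|^2 = 2\bar z_2^2/(\bar z_2^2-\bar z_1^2) - 1/2$; adding these two yields $\Omega(|z_1|^2-|z_2|^2) = 2(\bar z_1^2 + \bar z_2^2)/(\bar z_1^2 - \bar z_2^2)$. Since the left side is real, the right side must be real, which forces $(\bar z_1^2+\bar z_2^2)/(\bar z_1^2-\bar z_2^2)\in\mathbb{R}$; writing $w = \bar z_2^2/\bar z_1^2$ this says $(1+w)/(1-w)\in\mathbb{R}$, i.e. $w$ is real. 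Hence $z_2^2/z_1^2\in\mathbb{R}$, meaning $z_2 = t z_1$ or $z_2 = it z_1$ for some real $t$ — the two vortex pairs are either collinear through the origin or mutually orthogonal.

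The remaining work is to finish each branch. In the collinear branch $z_2 = t z_1$ all four vortices lie on a line through the origin, which is exactly the asserted collinear family; one should still verify such solutions exist (the $N=3$ proof's explicit-example technique, or solving the scalar equation in $t$ and $|z_1|$, handles this). In the orthogonal branch $z_2 = it z_1$ one substitutes back: then $\bar z_1^2 - \bar z_2^2 = (1+t^2)\bar z_1^2$ and the first equation becomes $\Omega z_1 = 2\bar z_1/((1+t^2)\bar z_1^2) + 1/(2\bar z_1) = \bar z_1^{-1}(2/(1+t^2) + 1/2)$, forcing $\Omega|z_1|^2 = 2/(1+t^2)+1/2$ and, crucially, $z_1^2$ real, so $z_1$ lies on a coordinate axis; the $z_2$-equation analogously gives $\Omega|z_2|^2 = 2/(1+t^2) + 1/2 \cdot(\text{sign terms})$ — here I would keep track of the signs carefully — and comparing the two modulus conditions pins down $t^2$. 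I expect it to force $t^2=1$, i.e. $|z_1|=|z_2|$ with $z_2 = \pm i z_1$, which is precisely a square centered at the origin. \textbf{The main obstacle} I anticipate is not any single hard inequality but the bookkeeping in the orthogonal branch: one must handle the signs $S_j$ consistently through the complex manipulations and rule out spurious sign combinations, and one must confirm that the surviving modulus/phase constraints genuinely collapse to $t^2=1$ rather than leaving a one-parameter family of non-square rhombi. Establishing that last collapse — showing no centered rhombus other than the square solves the system — is the delicate step, and it rests on comparing the two scalar modulus equations after the phases have been fixed.
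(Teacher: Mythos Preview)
Your route is substantially different from the paper's, and it is more work than necessary. The paper first uses the rotational freedom to place $\mathbf{x}_1=(x_1,0)$ on the real axis; then the symmetry conjecture immediately gives $\mathbf{x}_2=(-x_1,0)$, and the center-of-mass constraint forces $\mathbf{x}_3=-\mathbf{x}_4$. At that point a \emph{single} scalar equation, namely $\dot x_1=0$, yields $S_3/r_{13}^2=S_4/r_{14}^2$, hence $S_3=S_4$ and $r_{13}=r_{14}$, which with $y_3\neq 0$ puts $\mathbf{x}_3,\mathbf{x}_4$ on the $y$-axis. You forgo the rotational normalization and instead extract a reality condition on $z_2^2/z_1^2$ from the full complex system; this is valid but circuitous, since rotating $z_1$ onto the real axis at the outset accomplishes the same reduction for free.

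Two concrete issues in your write-up. First, your $z_1$-equation has the signs on the $S_2,S_4$ contributions wrong (with $S_2=S_4=-1$ the interaction terms should carry opposite sign to what you wrote, and the self-pair term $1/(2\bar z_1)$ should enter with the opposite sign as well); fortuitously these errors cancel in the particular combination you form, so the reality conclusion survives, but the displayed equations are not correct as stated. Second, and more importantly, the step you flag as delicate --- forcing $t^2=1$ in the orthogonal branch --- is genuinely needed and you do not carry it out. It \emph{does} go through: substituting $z_2=itz_1$ into the two (correctly-signed) modulus equations gives $\Omega|z_1|^2 = 2/(1+t^2)-1/2$ from the first and $\Omega|z_1|^2 = 2/(1+t^2)-1/(2t^2)$ from the second, and equating these yields $t^2=1$ immediately. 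This step is in fact glossed over in the paper's own argument (which jumps from $r_{13}=r_{14}$ to ``square'' without excluding a non-square rhombus), so completing it is a genuine contribution of your approach. Finally, note that you \emph{assume} $S_1=S_3$ and $S_2=S_4$ from a dynamical-symmetry heuristic, whereas the paper's one-line use of $\dot x_1=0$ actually \emph{derives} $S_3=S_4$ as a consequence of the equilibrium equations; your justification for the charge pairing should be tightened or replaced by that derivation.
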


\begin{proof}  Choosing $\mathbf{x}_1=(x_1,0)$ implies that $\mathbf{x}_2=(-x_1,0)$.  Then (\ref{eq:3})-(\ref{eq:4}) imply $x_3=-x_4$ and $y_3=-y_4$.  If $y_3=0$, the configuration is collinear, 
so assume $y_3\neq 0$.  Then the equation for $\dot{x}_1$ yields
\begin{equation*}
\frac{S_3}{r_{13}^2}=\frac{S_4}{r_{14}^2}.
\end{equation*}
Thus $r_{13}=r_{14}$, but since the configuration is symmetric about the origin this implies that it is a square.\end{proof}

Fixing $\mathbf{x}_1=(1,0)$, $\mathbf{x}_2=(-1,0)$, $\mathbf{x}_3=(0,1)$, $\mathbf{x}_4=(0,-1)$, $S_1=S_2=1$, and $S_3=S_4=-1$ gives a square fixed point with $\Omega=\frac{1}{2}$.  Linearizing about this equilibrium yields eigenvalues
\begin{equation}
\lambda_{1,2}=\pm \frac{i}{\sqrt{2}},\; \lambda_{3,4}=\lambda_{5,6}=\pm \frac{i}{4},\; \lambda_{7,8}=0
\end{equation}
with corresponding eigenvectors 
\begin{eqnarray*}
v_1&=\{1/3 i (i + 2 \sqrt{2}), 1/3 (1 - 2 i \sqrt{2}), 1/3 (1 - 2 i \sqrt{2}), 
 1/3 i (i + 2 \sqrt{2}), -1, -1, 1, 1\}=\bar{v}_2\\
v_3&=\{0, i, 0, i, 0, 1, 0, 1\}=\bar{v}_4\\
v_5&=\{-i, 0, -i, 0, 1, 0, 1, 0\}=\bar{v}_6\\
v_7&=\{-1, -1, 1, 1, 1, -1, -1, 1\}\\
v_8&=\mathbf{0}.
\end{eqnarray*} 

Fixing $\Omega=\frac{1}{2}$ gives a collinear fixed point $\mathbf{x}_1=(c,0)=-\mathbf{x}_3$, $\mathbf{x}_2=(d,0)=-\mathbf{x}_4$ where
\begin{eqnarray*}
c&=\sqrt{1 + \sqrt{2 (-1 + \sqrt{2})}}\\
d&=\frac{1}{2}(10 \sqrt{1 + \sqrt{2 (-1 + \sqrt{2})}} - 
   9 (1 + \sqrt{2 (-1 + \sqrt{2})})^{3/2} \\
&+ 4 (1 + \sqrt{2 (-1 + \sqrt{2})})^{
    5/2} - (1 + \sqrt{2 (-1 + \sqrt{2})})^{7/2}).
\end{eqnarray*}

This configuration has eigenvalues $\lambda_{1,2}\approx\pm 2.15$, $\lambda_{3,4}\approx\pm 1.43 i$, $\lambda_{5,6}\approx \pm.914$, $\lambda_{7,8}=0$, and 
hence possesses $N-2=2$ directions of instability. 
%as predicted in the remark above. 
\begin{figure}[h]
\centering
\includegraphics[width=4in]{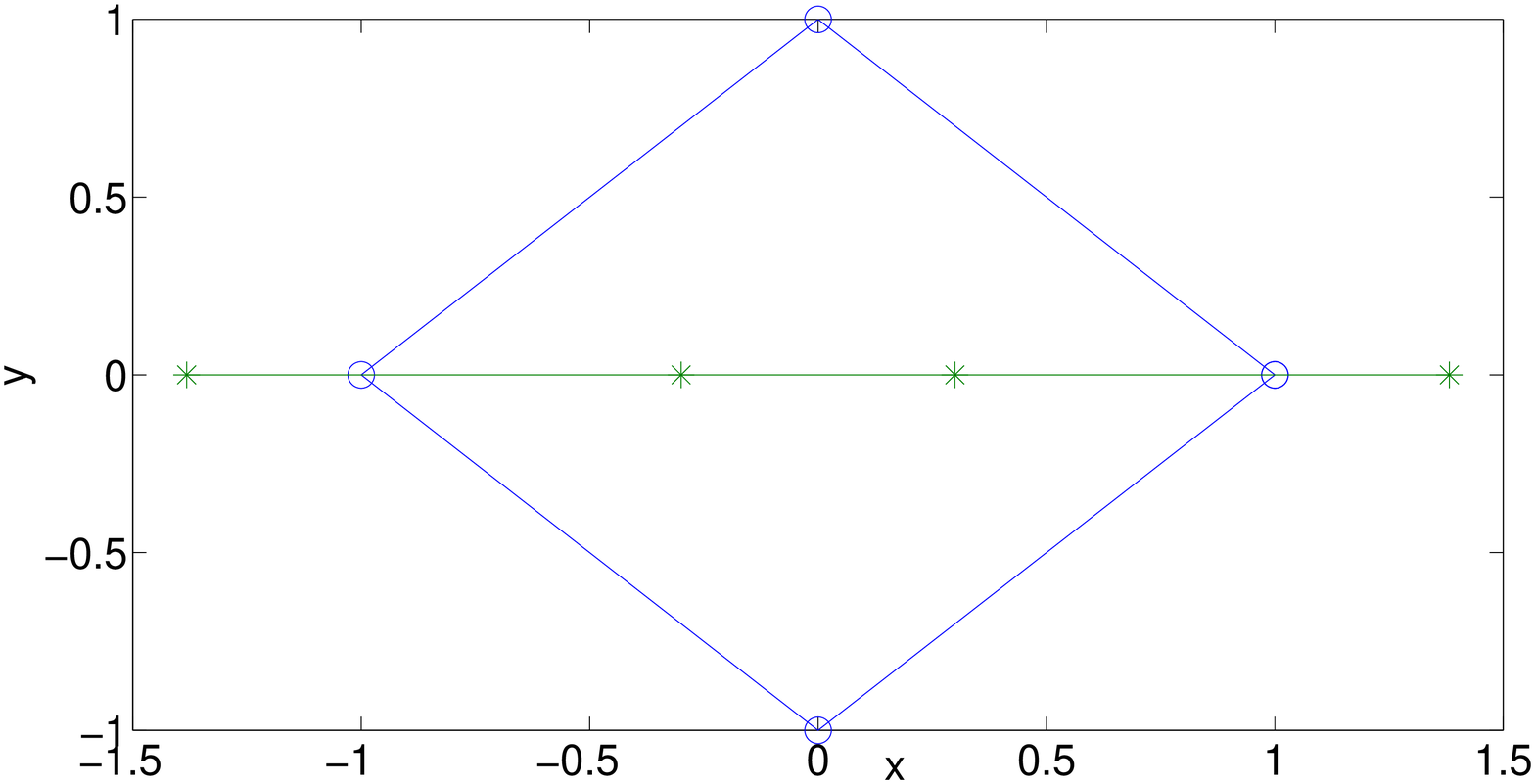}\;\;
\caption{Fixed points of the system in the case of $N=4$ and
with $\Omega=\frac{1}{2}$. The square configuration has the vortices
denoted by circles, while the collinear one has them denoted by stars.
Notice that adjacent vortices have opposite sign (unit magnitude) charge.}
\end{figure}

These results are consonant with the corresponding findings of
the 2d PDE of the NLS/GPE type~\cite{middel2010,middel_physd}.
For the square configuration the PDE predicts that such a state
exists even at the linear limit and bifurcates therefrom as a linearly
stable equilibrium (see also~\cite{rotating}) with the exception of
a possibility for an oscillatory instability (Hamiltonian Hopf bifurcation)
which can arise for an intermediate range of values of $\mu$. Nevertheless,
the state is linearly stable in the limit of large $\mu$. It should be
highlighted that this state also has a neutral direction because of
its radial symmetry within the isotropic two-dimensional parabolic
trap. Such a direction is shared also by the collinear configuration
which also has a pair of zero eigenvalues due to its invariance 
under rotations of its linear axis. Moreover, the latter configuration
has 2 directions of instability, in line with the expectation that
each higher collinear configuration will have an additional real eigenvalue
pair (than the previous one -- starting with the dipole that has none,
then the tripole has one, the aligned quadrupole two, etc.). This
expectation stems from the supercritical nature of the bifurcation
of these collinear states from the dark soliton state as
discussed in~\cite{middel2010,middel_physd}.

\subsection{$N=5$}
In the next section we prove that for general $N$ odd the $N$-gon 
(i.e., the canonical polygon with the vortices at its vertices)
is not a fixed point.  We thus hereafter investigate other configurations,
confining our considerations to ones 
that are symmetric about the origin.  Fix $(x_1,y_1)=(0,0)$ and $(x_2,y_2)=(r_1,0)$.  Then by symmetry $(x_3,y_3)=(-r_1,0)$.  Set $(x_4,y_4)=(r_2 \cos\theta_2,r_2\sin\theta_2)$, $0\leq \theta_2<\pi$, so that $(x_5,y_5)=(-r_2 \cos\theta_2,-r_2\sin\theta_2)$.  We compute
\begin{equation*}
\dot{x}_1=\frac{(S_3-S_5)\sin\theta_2}{r_2}
\end{equation*}
and so either $\sin\theta_2=0$ or $S_3=S_5$.  Assuming the latter holds we find from the equation for $y_1$ that $S_2=S_4$.  Then from the equation for $x_2$ it follows that $\sin(2\theta_2)=0$.  
If we choose the root $\theta_2=\frac{\pi}{2}$, then the configuration must be a square or rhombus centered on the origin.  Here, two cases are physically relevant.  One where the vortices alternate in sign in the counterclockwise direction with the central vortex being of either sign, and another where the outer vortices have the same sign opposite to the sign of the vortex at the origin.  In the first case, we find that the system has no fixed points.  In the second, we find that if the central vortex has charge $M>0$ and the outer vortices have charge $-1$, then the system has a fixed point if and only if $r_1=r_2$.  As an example, when the radius is one and $M=2$ we compute the configuration to be unstable with eigenvalues $\lambda_{1,2}\approx 3.14 i$, $\lambda_{3,4}=\sqrt{6}$, $\lambda_{5,6}\approx -1.15\pm 1.07 i$, $\lambda_{7,8}\approx 1.15\pm 1.07 i$ and $\lambda_{9,10}=0$.
% Up to now, we have considered the constraints imposed by the statics
%of the equations for the $x_i$, however, we also need to take into account
%the corresponding constraints for the $y_i$.
%This leads to an algebraic equation for  $\Omega$, which (assuming that
%the outer vortices all have charges equal to $-1$) reads
More generally, considering the algebraic constraints, one finds
that the configuration with a square surrounding the central vortex
can be realized provided suitable contraints connecting the central
vortex charge, the surrounding charges,
the precession frequency and the square's radius. The relevant
condition reads 
\begin{eqnarray}
r_1^2 = \frac{2 M -3 \tilde{M}}{2 \Omega \tilde{M}}
\label{extra1}
\end{eqnarray}
where $-\tilde{M}$ are the surrounding charges to the central one.
It is thus clear that such configurations will only exist
if $\tilde{M} \times (2 M -3 \tilde{M}) > 0$ and assuming
$\tilde{M}>0$ without loss of generality, this leads to $M > 3 \tilde{M}/2$.
In the case of $\tilde{M}=1$, the lowest charge that will work is $M=2$.
This configuration can be found to be definitely 
unstable due to a real pair $\lambda= \pm 2 \tilde{M}^{3/2} \Omega\sqrt{4 M -2 \tilde{M}}/{(2 M-3
\tilde{M})}$,
while other eigenvalues have more complicated functional forms
not provided here. As in the above numerical example, we find
a quartet of complex eigenvalues, a real pair, an imaginary pair and
a neutral pair  associated with the rotational invariance of
such a 5-vortex state.
%$\Omega=M - \frac{N}{2} +1$. This, in turn suggests that $M> \frac{N}{2} -1$,
%hence for our case of $N=4$, $M>1$, in line with our above choice of
%$M=2$. 
%The general
There are two observations to make here in connection to this.
This configuration is the same as the one labeled ``5x'' in the work
of~\cite{middel_physd}~\footnote{However, note that inadvertently the
latter work mentioned a quadrupole as surrounding the central vortex
in p. 1453; the correct statement is that 4 same charge vortices,
opposite in sign to the doubly charged central one are surrounding it.}.
The second is that in line with the numerical observations 
of~\cite{middel_physd}, we find that this configuration at large 
$\mu$ contains a complex eigenfrequency quartet, as well as a real
eigenvalue pair as manifestations of its instability~(see accordingly
the 3rd row, right panel of Fig.~4 in p.~1453 of~\cite{middel_physd}).

%\newline
\indent We now turn to the case when $\theta_2=0$; here, we have a collinear fixed point which only exists when the vortices have alternating charges.  
This fixed point can be described by the configuration $(0,0)$, $(\pm r_1,0)$, $(\pm r_2,0)$ where
\begin{eqnarray*}
r_1&=\sqrt{\frac{2 - \sqrt{3}}{2 \Omega}}\\
r_2&=\sqrt{\frac{\sqrt{3}}{2 \Omega}}
\end{eqnarray*}
This configuration was computed to be unstable with eigenvalues $\lambda_{1,2}\approx \pm 6.95 \Omega$, $\lambda_{3,4}\approx \pm 6.69 \Omega$, 
$\lambda_{5,6}\approx \pm 3.64 i \Omega$, $\lambda_{7,8}\approx \pm 2.8 \Omega $, $\lambda_{9,10}=0$ and so again we see that there are $N-2=3$ directions of real instability, as well as a neutral
direction. This is in line with the expectations of the earlier
works of~\cite{middel2010,middel_physd}, examining the PDE limit of
such 5-vortex collinear configurations.

\begin{figure}[h]
\centering
\includegraphics[width=4in]{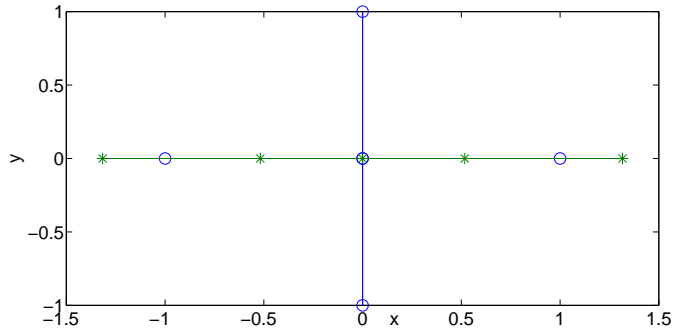}
\caption{The prototypical fixed points that exist for $\Omega=\frac{1}{2}$
and $N=5$. The circles denote the example where an $M=2$ vortex at the origin
is surrounded by $4$ vortices of charge 
$-\tilde{M}=-1$ over the nodes of
a square each at a unit distance from the origin. The stars denote the 
collinear configuration of alternating sign (unit magnitude) charges.}
\end{figure}

\subsection{$N=6$}

We are aware of two fixed points when $N=6$: the hexagon and the collinear configuration.  
Both of these are configurations of alternating charges (either along the
ring or along a line, respectively).
The linearization about the hexagon when all vortices are on the unit circle and $\Omega=\frac{1}{2}$ yields the
following eigenvalues
\begin{eqnarray*}
\lambda_{1,2}&=\pm \frac{3 i}{\sqrt{2}},\; \lambda_{3,4}=\lambda_{5,6}=\pm i, \; \lambda_{7,8}=\lambda_{9,10}=\pm \frac{1}{\sqrt{2}},\;\lambda_{11,12}=0
\end{eqnarray*}
and so, unlike the square, the hexagon is unstable.  
This is in line with earlier observations at the PDE level for this vortex
ring, see e.g. the discussion of~\cite{middel_physd} (top right of
Fig. 5 in p. 1454
and associated discussion). There, it is inferred (coming from the
opposite limit of small chemical potential) that the hexagon supercritically
bifurcates from the already unstable (even off of and near the linear limit)
state of the dark soliton ring~\cite{djf}, inheriting its instability.
%Indeed, we predict the following
%\begin{conjecture} For $N\geq 6$, the $N$-gon has $\frac{N}{2}-2$ distinct pairs of real eigenvalues.\end{conjecture}
%In the next section we will prove that the $N$-gon fixed point only exists for even $N$, and that it is unstable for $N\geq 6$.\newline

\indent The collinear configuration can be represented by $(\pm a,0)$, $(\pm b,0)$, $(\pm c,0)$ where $a\approx 0.23$, $b\approx 0.67$ and $c\approx 1.58$ were computed numerically.  The eigenvalues of the linearization about this fixed point were computed to be $\lambda_{1,2}\approx\pm 4.88$, $\lambda_{3,4}\approx \pm 4.86$, $\lambda_{5,6} \approx\pm 2.65$, $\lambda_{7,8}\approx \pm 2.2i$, $\lambda_{9,10}\approx\pm 0.96$, and $\lambda_{11,12}=0$ and so this configuration also has $N-2=4$ directions of instability. We note in this case too
that in order to generalize the relevant results in the case of arbitrary
$\Omega$, one has to scale the fixed point positions by $1/\sqrt{\Omega}$
and the corresponding eigenvalues linearly by $\Omega$.

\begin{figure}[h]
\centering
\includegraphics[width=4in]{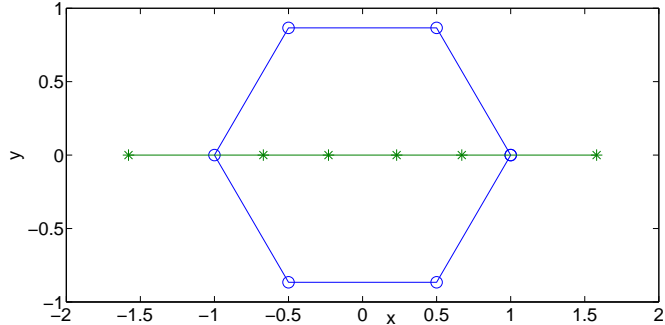}
\caption{Two fixed points when $N=6$ and again for $\Omega=\frac{1}{2}$.
The hexagonal configuration is shown by (blue) circles, while the 
collinear one by (green) stars. We again note as in the case of $N=4$
that adjacent vortices have opposite charges between each other.}
\end{figure}
%\indent To make the model slightly more realistic one can set $\Omega=\frac{\Omega_0}{1-r_j^2}$ for each $j$.  In that case, we find a configuration of the form $(\pm c,0)$, $(\pm c, d)$, $(c,\pm d)$ to be a fixed point when $N=6$, see below.  The eigenvalues associated with the linearization of the system about this configuration are $\lambda_{1,2}\approx \pm 11.2 i$, $\lambda_{3,4}\approx \pm 7.48 i$, $\lambda_{5,6}\approx 6.6 i$, $\lambda_{7,8,9,10}\approx \pm 1.44 \pm 2.26 i$, and $\lambda_{11,12}\approx\pm 2.06$.
%\begin{figure}[h]
%\centering
%\includegraphics[width=2.5in]{6precession.pdf}
%\caption{Fixed point for $\Omega_0\approx -0.21$, $c\approx 0.72$, $d\approx 0.9$ and vortices alternate in sign as one moves in the counterclockwise direction around the configuration.}
%\end{figure}

\section{Large $N$}
\subsection{The $N$-gon fixed point}

%For this section, we set $S_j=(-1)^{j+1}$, $j=1,2,...,N$ and consider configurations, like the square above, that consist of alternating charges placed at the vertices of a regular polygon.  

Up to now, we have provided detailed (existence and
stability) features of setups containing
only small numbers of vortices.
We now generalize our conclusions, by offering some proofs, as well as
some conjectures for more general vortex configurations featuring
large numbers of vortices $N$. As we will see, there are some significant
differences between the cases of an even number of vortices and those
of an odd number of vortices, both at the level of existence, as well
as at that of stability.  We will assume hereafter that the vortices
are of alternating charge, namely $S_j=(-1)^{j+1}$, $j=1,2, \dots,N$.

\begin{proposition} The polygonal $N$-vortex (hereafter, sometimes
referred to as ``$N$-gon'') configuration 
described above is a fixed point of the point vortex system 
if and only if $N$ is even. Moreover, the configuration is linearly 
stable for $N=4$ and linearly unstable otherwise. \end{proposition}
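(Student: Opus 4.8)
The plan is to parametrise the configuration as $z_j^{\star}=R\,\omega^{\,j-1}$ with $\omega=e^{2\pi i/N}$ and $S_j=(-1)^{j+1}$, to work in the complex form (\ref{eqn4}), and to treat existence and stability in turn. For existence I substitute $z_j^{\star}$ into the equations of motion and use the elementary observation that, with indices read modulo $N$, the identity $S_{j+m}=(-1)^{m}S_j$ holds for \emph{all} $j,m$ \emph{precisely when $N$ is even}; for odd $N$ the alternating pattern fails to close up around the cycle (vertices $1$ and $N$ carry the same charge), and this is exactly where the parity dichotomy originates. When $N$ is even the interaction sum collapses, $\sum_{k\neq j}S_k/(\bar z_j^{\star}-\bar z_k^{\star})=(S_j\omega^{\,j-1}/R)\,C$ with $C=\sum_{m=1}^{N-1}(-1)^m/(1-\bar\omega^{\,m})$, and the fixed-point equations reduce to the single scalar relation $\Omega R^{2}=-C$. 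Using $1/(1-e^{-i\alpha})=1/2-(i/2)\cot(\alpha/2)$ and pairing $m\leftrightarrow N-m$ — the cotangent parts cancel, the constant parts add — one gets $C=-1/2$; hence the alternating $N$-gon is a fixed point iff $\Omega R^{2}=1/2$, which for every $\Omega>0$ is realised by $R=(2\Omega)^{-1/2}$ (e.g.\ $R=1$ for $\Omega=1/2$). For odd $N$ I compute instead $\dot r_1$ from (\ref{eqn5}): with all radii equal this is $\dot r_1=(2R)^{-1}\sum_{m=1}^{N-1}(-1)^{m}\cot(\pi m/N)$, and now the pairing $m\leftrightarrow N-m$ \emph{doubles} rather than cancels (since $(-1)^{N-m}=-(-1)^m$ while $\cot(\pi(N-m)/N)=-\cot(\pi m/N)$), giving $\dot r_1=R^{-1}\sum_{m=1}^{(N-1)/2}(-1)^{m}\cot(\pi m/N)$; as $\cot(\pi m/N)$ is positive and strictly decreasing for $1\le m\le(N-1)/2$, this alternating sum is strictly negative, so $\dot r_1\neq0$ and the odd $N$-gon is not an equilibrium.

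For stability I linearise (\ref{eqn4}) about the even $N$-gon, taking $R=1$, $\Omega=1/2$ (the eigenvalues for general $\Omega$ rescale linearly). The substitution $\delta z_j=\omega^{\,j-1}\xi_j$ brings the linearised equations to $\dot\xi_j=iS_j\big(\frac12\xi_j-G\bar\xi_j+\sum_{m=1}^{N-1}h_m\bar\xi_{j+m}\big)$, where — and this is the crucial simplification — the coefficients turn out to be real: $h_m=(-1)^{m+1}/(4\sin^2(\pi m/N))$ and $G=\sum_{m=1}^{N-1}(-1)^{m+1}\omega^{\,m}/(4\sin^2(\pi m/N))=(N^{2}-10)/24$. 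Writing $\xi_j=p_j+iq_j$, the linearisation takes the form $\dot P=M_1Q$, $\dot Q=M_2P$ with $M_a=D\tilde M_a$, $D=\mathrm{diag}((-1)^{j+1})$; since the only $j$-dependence of the coupling coefficients is through $S_j$, which is exactly $2$-periodic, the factor $D$ pulls out and $\tilde M_1,\tilde M_2$ are \emph{circulant}. The identity $D\,\mathrm{circ}(c_m)\,D=\mathrm{circ}((-1)^m c_m)$ then shows that $M_1M_2=(D\tilde M_1D)\,\tilde M_2$ is a product of circulants, hence itself circulant, so it is diagonalised by the discrete Fourier transform with eigenvalues $\mu_p=\widehat{c''}(p)\,\widehat{c'}(p)$, $p=0,\dots,N-1$. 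Feeding in the classical trigonometric sum $\sum_{m=1}^{N-1}\cos(2\pi mq/N)/\sin^2(\pi m/N)=(N^{2}-1)/3-2q(N-q)$ evaluates these in closed form: $\widehat{c''}(p)=-(N-2p)^{2}/8$ and $\widehat{c'}(p)=\big(8-(N-2q)^{2}\big)/8$ with $q=(p+N/2)\bmod N$, so that $(N-2q)^{2}=4p^{2}$ for $0\le p\le N/2-1$. The nonzero eigenvalues of the full $2N\times2N$ linearisation are the numbers $\pm\sqrt{\mu_p}$, and there is in addition a zero pair (a $2\times2$ Jordan block) reflecting the $SO(2)$ rotational invariance, i.e.\ the vector $(1,\dots,1)\in\ker M_1$.

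Both stability conclusions now fall out. For $N=4$ one checks $\widehat{c''}(p)<0$ and $\widehat{c'}(p)>0$ for $p\neq2$ while $\widehat{c''}(2)=0$, so every $\mu_p\le0$, all eigenvalues are purely imaginary (together with the neutral rotational pair), and the square is linearly stable, consistent with the spectrum listed above. For even $N\ge6$, the Fourier mode $p=2$ (which satisfies $2\le N/2-1$) gives $\widehat{c''}(2)=-(N-4)^{2}/8<0$ and $\widehat{c'}(2)=(8-16)/8=-1<0$, hence $\mu_2=(N-4)^{2}/8>0$, producing the real eigenvalue pair $\lambda=\pm(N-4)/(2\sqrt2)$: the $N$-gon is linearly unstable. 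I expect the substantive part of the proof to be the assembly of the right algebraic identities — the evaluation $C=-1/2$, the reality of $h_m$ and $G$, the fact that $M_1M_2$ is circulant even though $M_1$ and $M_2$ individually are not, and the closed form of the $\csc^{2}$ sum. Once those are in hand the instability criterion $\mu_2=(N-4)^{2}/8>0$ for $N\ge6$ is a one-line consequence, and the $N=4$ and odd-$N$ cases are short.
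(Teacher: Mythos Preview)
Your proposal is correct and reaches the same conclusions as the paper, but the route differs in several interesting ways.

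\textbf{Existence.} Both you and the paper arrive at the condition $\Omega R^{2}=1/2$ for the even $N$-gon. For odd $N$ the paper takes a shorter path: in polar coordinates one has $\dot\theta_j=S_j\Omega+\tfrac12\sum_{k\neq j}S_k$, and since $\sum_k S_k=1$ when $N$ is odd, the equations $\dot\theta_j=0$ force the incompatible pair $\Omega=0$ (for $j$ odd) and $\Omega=1$ (for $j$ even). Your alternative --- showing $\dot r_1\neq0$ via the alternating cotangent sum --- is valid but requires the monotonicity argument; the paper's $\dot\theta$ argument needs only parity.

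\textbf{Stability.} Both arguments exploit the same structural fact: the linearisation has vanishing diagonal blocks, so the spectrum is $\{\pm\sqrt{\mu}\,:\,\mu\in\mathrm{spec}(BC)\}$ with $BC$ circulant. The paper works in polar coordinates, writes out the entries of $B$ and $C$, and then invokes Mathematica to evaluate the single eigenvalue $\gamma_{N-2}=\bigl(\tfrac{\sqrt2}{4}N-\sqrt2\bigr)^{2}=(N-4)^{2}/8$, which is exactly your $\mu_2$. Your approach is more analytic: by the complex substitution $\delta z_j=\omega^{j-1}\xi_j$ and the identity $D\,\mathrm{circ}(c_m)\,D=\mathrm{circ}((-1)^mc_m)$, you diagonalise both circulant factors simultaneously and obtain closed forms $\widehat{c''}(p)=-(N-2p)^2/8$, $\widehat{c'}(p)=(8-4p^2)/8$ for all Fourier modes, using the classical $\csc^{2}$ sum. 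This buys you an explicit description of the full spectrum without symbolic computation, and in particular recovers the paper's numerically listed hexagon eigenvalues $\{\pm\tfrac{3i}{\sqrt2},\pm i,\pm i,\pm\tfrac{1}{\sqrt2},\pm\tfrac{1}{\sqrt2},0\}$ exactly.

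One caution: the sign you display in the linearised equation ($-G\bar\xi_j+\sum h_m\bar\xi_{j+m}$) appears to be the negative of what a direct computation from (\ref{eqn4}) gives; however this is immaterial for the argument, since the product $M_1M_2$ --- and hence every $\mu_p$ --- is unaffected. Your instability criterion $\mu_2=(N-4)^2/8>0$ for even $N\ge6$ and the $N=4$ marginal case are correctly identified either way.
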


First, if the $N$-gon is an equilibrium any radial scaling is also an equilibrium for some different choice of $\Omega$, and so we may assume that each vortex lies on the unit circle.  This leaves a single free real parameter $\Omega$, and $z_1= e^{2\pi i/N}, z_2=e^{4\pi i/N},..., z_N=1$.  We have examined the  
$N=4$ and $N=6$ (i.e., the marginally stable and the first
unstable) cases in detail above.  Notice that by abusing notation,
we can also consider the dipole as an example of this type with $N=2$,
which is stable at the particle level 
in accordance with the earlier analysis e.g. of~\cite{middel_pla}.

The first proposition relies on the following two lemmas.

\begin{lemma} If $N$ is odd, the $N$-gon is not a fixed point. If $N$ is even, the $N$-gon is a fixed point.\end{lemma}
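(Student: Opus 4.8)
The plan is to work with the complex form of the equations of motion~(\ref{eqn4}) evaluated on the candidate $N$-gon configuration $z_j = e^{2\pi i j/N}$ with alternating charges $S_j = (-1)^{j+1}$. For such a configuration to be a fixed point we need $i\dot z_j = 0$ for all $j$, i.e.
\begin{equation*}
\Omega z_j = \sum_{k\neq j} \frac{S_k}{\bar z_j - \bar z_k}.
\end{equation*}
Since $\bar z_j = z_j^{-1}$ on the unit circle, the right-hand side can be written as $z_j$ times a sum over $k\neq j$ of $S_k/(z_k - z_j) \cdot (z_j z_k / (z_j z_k))$-type terms; more cleanly, $1/(\bar z_j - \bar z_k) = z_j z_k/(z_k - z_j)$. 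Writing $z_k - z_j = z_j(e^{2\pi i (k-j)/N} - 1)$ and $z_k = z_j e^{2\pi i (k-j)/N}$, the whole right-hand side becomes $z_j \sum_{m=1}^{N-1} (-1)^{m}\, \omega^{m}/(\omega^{m}-1)$ where $\omega = e^{2\pi i/N}$ and we have used $S_{j+m} = (-1)^{j+m+1} = S_j (-1)^m$ together with $S_j z_j = \pm z_j$; one must track the sign $S_j$ carefully, but the key point is that after factoring out $z_j$ the remaining sum $\Sigma_N := \sum_{m=1}^{N-1} (-1)^m \omega^m/(\omega^m - 1)$ is independent of $j$. So the $N$-gon is a fixed point precisely when $\Sigma_N$ is real (it can then be matched by choosing $\Omega$ equal to that real value, up to the sign bookkeeping), and it fails to be a fixed point when $\Sigma_N$ has nonzero imaginary part.

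The second step is to evaluate, or at least determine the imaginary part of, $\Sigma_N$. The standard trick is to symmetrize: pair the term $m$ with the term $N-m$. Since $\omega^{N-m} = \bar\omega^m = 1/\omega^m$, one gets $\omega^{N-m}/(\omega^{N-m}-1) = 1/(1 - \omega^m)$, and $(-1)^{N-m} = (-1)^N (-1)^m$. When $N$ is even, $(-1)^{N-m} = (-1)^m$, so the $m$ and $N-m$ terms combine to $(-1)^m[\omega^m/(\omega^m-1) + 1/(1-\omega^m)] = (-1)^m \cdot (\omega^m - 1)/(\omega^m - 1) = (-1)^m$, which is real; summing over the pairs (and handling the self-paired term $m = N/2$, where $\omega^{N/2} = -1$ gives $(-1)^{N/2}(-1)/(-2) = (-1)^{N/2}/2$, also real) shows $\Sigma_N \in \mathbb{R}$. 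When $N$ is odd, $(-1)^{N-m} = -(-1)^m$, so the $m$ and $N-m$ terms combine to $(-1)^m[\omega^m/(\omega^m-1) - 1/(1-\omega^m)] = (-1)^m(\omega^m + 1)/(\omega^m - 1)$; writing $\omega^m = e^{i\phi}$ with $\phi = 2\pi m/N$, we have $(e^{i\phi}+1)/(e^{i\phi}-1) = -i\cot(\phi/2)$, which is purely imaginary and nonzero for $0 < \phi < 2\pi$. Hence for $N$ odd $\Sigma_N$ is a nonzero sum of purely imaginary terms of the same sign (the $\cot(\phi/2)$ values for $m = 1,\dots,(N-1)/2$ are all positive), so $\Sigma_N \notin \mathbb{R}$ and the $N$-gon is not a fixed point; for $N$ even it is. An alternative, cleaner route to the odd case is to compute $\dot r_1$ (or $\dot x_1$) directly from~(\ref{eqn5}) for a vortex on the polygon and observe it is a nonzero multiple of $\sum_m (-1)^m \cot(\pi m/N)$.

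I expect the main obstacle to be the sign bookkeeping: reconciling the charge-dependent factor $S_j$ (which is $+1$ for odd $j$ and $-1$ for even $j$) with the requirement $\Omega z_j = S_j \cdot(\text{sum})$ versus $\Omega z_j = (\text{sum})$, and making sure the alternating sign $(-1)^m$ inside $\Sigma_N$ is attached correctly after the index shift $k = j + m \pmod N$ — in particular that $S_k = S_j(-1)^m$ holds for all $j$ once we read indices cyclically, which is automatic since $N$ is even in the relevant direction but needs a remark when $N$ is odd (there the alternating pattern $S_j = (-1)^{j+1}$ is not even well-defined cyclically, which is itself part of why odd $N$ fails and is worth stating explicitly). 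Everything else is a short trigonometric identity. Once the fixed-point characterization is reduced to the reality of $\Sigma_N$, the even/odd dichotomy follows from the pairing argument above with no heavy computation.
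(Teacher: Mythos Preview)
Your route through the complex equation~(\ref{eqn4}) is valid, but it is a genuinely different and longer argument than the paper's. The paper works in polar coordinates: on the unit circle one has $r_{jk}^2 = 2(1-\cos\theta_{jk})$, so equation~(\ref{eqn6}) collapses immediately to $\dot\theta_j = S_j\Omega + \tfrac12\sum_{k\neq j}S_k$. For even $N$ the total charge vanishes, hence $\sum_{k\neq j}S_k = -S_j$ and $\dot\theta_j = S_j(\Omega-\tfrac12)$, yielding a fixed point at $\Omega=\tfrac12$ (with $\dot r_j=0$ by reflection symmetry). For odd $N$ the total charge is $+1$, so $\dot\theta_j=\Omega$ for $j$ odd and $\dot\theta_j=-\Omega+1$ for $j$ even, which cannot vanish simultaneously. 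No pairing or trigonometric identities are needed; the whole proof is three lines.

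Your argument, by contrast, reduces the question to whether $\Sigma_N$ is real, and for odd $N$ there is a genuine gap. After pairing $m$ with $N-m$ you obtain contributions $(-1)^{m}(-i)\cot(\pi m/N)$ for $m=1,\dots,(N-1)/2$, and you claim these are ``of the same sign''. They are not: the factor $(-1)^m$ makes them alternate, so positivity of $\cot(\pi m/N)$ alone does not give a nonzero sum. The fix is easy --- $\cot(\pi m/N)$ is strictly decreasing on that range (all arguments lie in $(0,\pi/2)$), so the alternating sum $\sum_{m=1}^{(N-1)/2}(-1)^{m+1}\cot(\pi m/N)$ is strictly positive by the Leibniz criterion --- but it must be said. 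The same correction applies to your alternative route through $\dot r_1$. Finally, as you yourself note, the reduction to a $j$-independent $\Sigma_N$ already fails for odd $N$ because $S_{j+m}=S_j(-1)^m$ does not hold cyclically; you should make explicit that for the odd case it suffices to check the equation at a single vortex, since failure at one $j$ rules out a fixed point.
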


\begin{proof} This is easiest to see in polar coordinates.  First assume that the point vortices lie on the unit circle.  When $N$ is even, note that by symmetry $\dot{r}_j=0$ for each $j$. For $\mathbf{r}=\mathbf{1}$, one finds
\begin{eqnarray*}
\dot{\theta_j}&=(-1)^{j+1} \Omega+\frac{1}{2}\sum_{k\neq j} (-1)^{k+1}\\
&=\Omega-\frac{1}{2},\;\;\mathrm{if}\; j \;\mathrm{is}\; \mathrm{odd}\\
&=-\Omega+\frac{1}{2},\;\; \mathrm{if}\; j \;\mathrm{is}\; \mathrm{even}
\end{eqnarray*}
and so the $N$-gon is a fixed point when $\Omega=1/2$.  On the other hand, when $N$ is odd we have
\begin{eqnarray*}
\dot{\theta_j}&=(-1)^{j+1} \Omega+\frac{1}{2}\sum_{k\neq j} (-1)^{k+1}\\
&=\Omega,\;\;\mathrm{if}\; j \;\mathrm{is}\; \mathrm{odd}\\
&=-\Omega+1,\;\; \mathrm{if}\; j \;\mathrm{is}\; \mathrm{even}
\end{eqnarray*}
and so for no value of $\Omega$ is the odd $N$-gon a fixed point.

\end{proof}

\begin{lemma} When the $N$-gon is a fixed point, it is unstable for $N\geq 6$.\end{lemma}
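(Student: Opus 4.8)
The plan is to linearize the equations of motion around the $N$-gon equilibrium and exploit the cyclic symmetry of the configuration to block-diagonalize the Jacobian via a discrete Fourier transform, then locate an unstable (real or complex) eigenvalue in one of the Fourier blocks. Since all vortices sit on the unit circle at angles $2\pi j/N$ with alternating charges $S_j=(-1)^{j+1}$, it is cleanest to work in the complex formulation \eqref{eqn4} and write $z_j = e^{i(2\pi j/N + \psi)}(1+\rho_j)$ with small real radial perturbations $\rho_j$ and small angular perturbations $\phi_j$ (equivalently, a single complex perturbation $\zeta_j$); substituting into \eqref{eqn4} and keeping linear terms yields a linear system whose coefficients depend on $j,k$ only through $j-k$ after accounting for the $(-1)$ factors. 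I would absorb the alternating sign into a redefinition so that the perturbation ansatz becomes a genuine Fourier mode $\rho_j,\phi_j \propto \omega^{mj}$ with $\omega=e^{2\pi i/N}$, and then each mode index $m$ gives a $2\times 2$ (or $4\times 4$ once one separates the mode $m$ from $-m$ and the alternating sign couples $m$ to $m + N/2$) matrix whose eigenvalues are explicit sums of the form $\sum_{\ell=1}^{N-1} (-1)^{\ell+1} f(\ell)\,\omega^{m\ell}$ with $f(\ell) = 1/|1-\omega^\ell|^2$ and related kernels coming from differentiating $1/(\bar z_j - \bar z_k)$.

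First I would carry out the linearization carefully, using that at the fixed point $\bar z_j - \bar z_k = \overline{e^{2\pi i j/N} - e^{2\pi i k/N}}$ has known modulus $|1-\omega^{j-k}|^2 = 4\sin^2(\pi(j-k)/N)$, so that all the lattice sums reduce to trigonometric sums. Second, I would identify the relevant Fourier block: the physically interesting instability should come from a low mode (I expect $m=2$ in the alternating-charge indexing, i.e. the quadrupolar deformation of the ring), and I would write down the resulting characteristic equation $\lambda^2 = P(m)$ (times $\Omega^2$, using the scaling already noted in the excerpt) where $P(m)$ is an explicit difference of two trigonometric sums. Third, I would show $P(m) > 0$ for that mode whenever $N \ge 6$ (and that it is $\le 0$ for $N=4$, consistent with marginal stability), which gives a real eigenvalue pair $\lambda = \pm \Omega\sqrt{P(m)} \ne 0$ and hence instability. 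For the collinear-type or complex-eigenvalue branches one would instead exhibit a mode whose $2\times2$ block has complex eigenvalues with nonzero real part, but the simplest route to the theorem is a single real unstable pair.

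The main obstacle I expect is the sign/monotonicity analysis of the trigonometric sum $P(m)$ uniformly in $N$: these sums, e.g. $\sum_{\ell=1}^{N-1} (-1)^{\ell+1}\csc^2(\pi\ell/N)$ and its cosine-weighted variants, do not telescope, and one must either find closed forms (there are known evaluations of $\sum \csc^2(\pi \ell/N)$ and $\sum (-1)^{\ell}\csc^2(\pi\ell/N)$ in terms of $N^2$) or bound them by comparison with integrals. I would handle the closed-form route: the unweighted sum is $(N^2-1)/3$, and the alternating and cosine-weighted versions have similar rational-in-$N$ expressions, after which $P(m)$ becomes an explicit polynomial in $N$ whose positivity for $N\ge 6$ is elementary. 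A secondary subtlety is bookkeeping the coupling between Fourier modes $m$ and $m+N/2$ induced by the alternating charges and the complex-conjugate structure of \eqref{eqn4}, which enlarges some blocks from $2\times2$ to $4\times4$; I would organize this by treating $(\zeta_j, \bar\zeta_j)$ together from the start so the block structure is transparent, and verify that the $N=4$ and $N=6$ eigenvalue lists computed explicitly above match the block formulas as a consistency check before asserting the general statement.
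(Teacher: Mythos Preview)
Your strategy is essentially the paper's: linearize at the $N$-gon, exploit the cyclic symmetry to diagonalize via the discrete Fourier transform (equivalently, circulant-matrix eigenvalue formulas), and then exhibit a single Fourier mode with $\lambda^2>0$. The execution differs in two places worth noting. First, the paper linearizes in \emph{polar} coordinates $(r_j,\theta_j)$ rather than in the complex $z_j$; on the unit circle with $r_j\equiv 1$ the diagonal blocks $A$ and $D$ of the $2N\times 2N$ Jacobian vanish identically, so the spectrum is exactly $\lambda=\pm\sqrt{\gamma}$ with $\gamma$ an eigenvalue of the $N\times N$ matrix $BC$. Crucially, the paper observes that $BC$ is a genuine circulant (the $S_j$, $S_k$ factors cancel in the product), so the ordinary DFT diagonalizes it and you never need the $4\times 4$ ``$m\leftrightarrow m+N/2$'' coupling blocks you are anticipating; your complex-variable route would work but carries extra bookkeeping that the polar choice avoids. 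Second, rather than evaluating the alternating $\csc^2$ sums by hand, the paper simply picks the mode $j=N-2$ and reports (via computer algebra) the closed form $\gamma_{N-2}=\bigl(\tfrac{\sqrt{2}}{4}N-\sqrt{2}\bigr)^2=(N-4)^2/8$, which is manifestly positive for $N\ge 6$ and zero at $N=4$. Your plan to derive $P(m)$ as a polynomial in $N$ from the known identity $\sum_{\ell=1}^{N-1}\csc^2(\pi\ell/N)=(N^2-1)/3$ and its alternating/cosine-weighted cousins is a cleaner by-hand alternative and would land on the same answer.
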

\begin{proof}
This result will require a detailed examination of the corresponding stability
matrix.
We again consider the system in polar coordinates.  The resulting matrix, denoted by $M$, of the linearization is a $2N\times2N$ block matrix made up of $N\times N$ blocks

\[M:= \left( \begin{array}{cc}
A & B \ \\
C & D\end{array} \right).\]

We compute each of these blocks explicitly.  Let $a_{j,k}$ denote the entry of $A$ in the $j^{th}$ row and $k^{th}$ column, and similarly define $b_{j,k}$, $c_{j,k}$ and $d_{j,k}$.  Let $\theta_{jk}=\theta_j-\theta_k$ and $r_{jk}=|z_j-z_k|$.  Then
\begin{eqnarray*}
a_{j,k}&=\frac{S_j (r_k^2-r_j^2) \sin(\theta_{jk})}{r_{jk}^4},\; j\neq k\\
a_{j,j}&=\sum_{k\neq j} \frac{2 r_k S_k (r_j-r_k\cos(\theta_{jk}))\sin(\theta_{jk})}{r_{jk}^4}.
\end{eqnarray*}
In the case of the $N$-gon where $S_k=(-1)^{k+1}$ and $r_j=1$ for all $j$ we see that the above entries are identically zero, hence $A=\mathbf{0}$, the zero matrix.  Next,
\begin{eqnarray*}
b_{j,k}&= \frac{r_j S_j((r_j^2+r_k^2)\cos(\theta_{jk})-2r_j r_k)}{r_{jk}^4},\; j\neq k\\
b_{j,j}&=\sum_{k\neq j} \frac{r_k S_k(2 r_k r_j-(r_j^2+r_k^2)\cos(\theta_{jk}))}{r_{jk}^4}.
\end{eqnarray*}
In this case, when all radii are set to one we find
\begin{eqnarray*}
b_{j,k}&= -\frac{S_j}{r_{jk}^2}\\
b_{j,j}&= \sum_{k\neq j} \frac{S_k}{r_{jk}^2}.
\end{eqnarray*}
The matrix $C$ is
\begin{eqnarray*}
c_{j,k}&= \frac{S_k((r_j^2+r_k^2)\cos(\theta_{jk})-2 r_j r_k)}{r_k r_{jk}^4},\; j\neq k\\
c_{j,j}&= -\sum_{k\neq j} \frac{S_k (r_j^2-2r_j r_k \cos(\theta_{jk})+r_k^2 \cos(2\theta_{jk}))}{r_j r_{jk}^4}-\sum_{k\neq j} \frac{S_k (r_j-r_k \cos(\theta_{jk}))}{r_j^2 r_{jk}^2}.
\end{eqnarray*}
Again setting $\mathbf{r}=1$ one obtains after some simplification
\begin{eqnarray*}
c_{j,k}&=-\frac{S_j}{r_{jk}^2}\\
c_{j,j}&=\sum_{k\neq j} \frac{S_k}{r_{jk}^2}-\sum_{k\neq j} S_k.
\end{eqnarray*}
Here we have used that $r_{jk}^2=2-2\cos(\theta_{jk})$ when $\mathbf{r}=\mathbf{1}$.
Finally, we have
\begin{eqnarray*}
d_{j,k}&= \frac{S_j r_j(r_j^2-r_k^2)\sin(\theta_{jk})}{r_k r_{jk}^4},\; j\neq k\\
d_{j,j}&=\sum_{k\neq j} \frac{S_k r_k (r_k^2-r_j^2)\sin(\theta_{jk})}{r_j r_{jk}^4}
\end{eqnarray*}
which is the zero matrix when $\mathbf{r}=\mathbf{1}$.
We are interested in the determinant of the matrix
\[ \left( \begin{array}{cc}
-\lambda I & B \ \\
C & -\lambda I \end{array} \right).\]

Since the two matrices in the bottom row of $M$ commute, one has the formula
\begin{eqnarray*}
\det(M)=\det(AD-BC)
\end{eqnarray*}
and
\begin{eqnarray*}
\det(M-\lambda I_{2N\times 2N})=\det(\lambda^{2} I_{N\times N}-BC).
\end{eqnarray*}
When $\Omega=\frac{1}{2}$, $BC$ is a circulant matrix and hence its eigenvalues are real.  Therefore, $\lambda$ must be real or purely imaginary.  To determine the eigenvalues of a circulant matrix, it is sufficient to know the first row of the matrix.  Let $\widetilde{B}=BC$.  Then
\begin{eqnarray*}
\tilde{b}_{1,1}&=\left(\sum_{k\neq 1} \frac{S_k}{r_{1k}^2}\right)^2-\sum_{k\neq 1} S_k \sum_{k\neq 1} \frac{S_k}{r_{1k}^2}+\sum_{k\neq 1} \frac{S_1 S_k}{r_{1k}^4}\\
\tilde{b}_{1,2}&=\left(\frac{S_1}{r_{12}^2}\right)\left[\sum_{k\neq 1} \frac{S_k}{r_{1k}^2}+\sum_{k\neq 2} \frac{S_k}{r_{2k}^2}-\sum_{k\neq 2} S_k\right]+\sum_{k\neq 1,2} \frac{S_1 S_k}{r_{1k}^2 r_{2k}^2}\\
\tilde{b}_{1,3}&=\left(-\frac{S_1}{r_{13}^2}\right)\left[\sum_{k\neq 1} \frac{S_k}{r_{1k}^2}+\sum_{k\neq 3} \frac{S_k}{r_{3k}^2}-\sum_{k\neq 3} S_k\right]+\sum_{k\neq 1,3} \frac{S_1 S_k}{r_{1k}^2 r_{3k}^2}\\
\end{eqnarray*}
and in general for $m\neq 1$
\begin{eqnarray}
\tilde{b}_{1,m}&=(-1)^{m+1}\frac{S_1}{r_{1m}^2}\left[ \sum_{k\neq 1} \frac{S_k}{r_{1k}^2}+\sum_{k\neq m} \frac{S_k}{r_{mk}^2}-\sum_{k\neq m} S_k \right]+\sum_{k\neq 1,m} \frac{S_k}{r_{1k}^2 r_{mk}^2}.
\end{eqnarray}
When $S_k=(-1)^{k+1}$, we have
\begin{eqnarray*}
\tilde{b}_{1,1}&= \left(\sum_{k\neq 1} \frac{(-1)^{k+1}}{r_{1k}^2}\right)^2+ \sum_{k\neq 1} \frac{(-1)^{k+1}}{r_{1k}^2}+\sum_{k\neq 1} \frac{(-1)^{k+1}}{r_{1k}^4}\\
\tilde{b}_{1,m}&=(-1)^{m+1}\frac{1}{r_{1m}^2}\left[ \sum_{k\neq 1} \frac{(-1)^{k+1}}{r_{1k}^2}+\sum_{k\neq m} \frac{(-1)^{k+1}}{r_{mk}^2}+ (-1)^{m+1} \right]+\sum_{k\neq 1,m} \frac{(-1)^{k+1}}{r_{1k}^2 r_{mk}^2}.
\end{eqnarray*}
The eigenvalues of $\widetilde{B}$ are given by 
\begin{equation*}
\gamma_j=\sum_{k=1}^{N} \tilde{b}_{1,k}\omega_j^{k-1},\; j=1,...,N
\end{equation*}
and the eigenvectors are
\begin{equation*}
v_j=\{1,\omega_j,\omega_j^2,...,\omega_j^{N-1}\}
\end{equation*}
 where $\omega_j=e^{2\pi i j/N}$ is the $j^{th}$ root of unity.  

In particular, we find that for chosen even $N$ we can explicitly compute any eigenvalue although a simplified expression for the eigenvalues was not found.  For $j=N-2$, $\gamma_j$, the eigenvalues of $\tilde{B}$, will be nonnegative when $N\geq 4$. Using Mathematica, we explicitly compute $\gamma_{N-2}=\left(\frac{\sqrt{2}}{4}N-\sqrt{2}\right)^2\geq 0$.  Thus $\lambda_j=\pm \sqrt{\gamma_j}$ is 
real and nonzero when $N>4$, and the corresponding $N$-gon is unstable.
By the same token, the $N$-gon with $N=4$ will be marginally stable.
\end{proof}

We point out here that these results, as well as numerical results
both at the level of the particle equations and at that of the
underlying PDE~\cite{middel_physd} suggest the following.
\begin{conjecture} For $N\geq 6$, the $N$-gon has $\frac{N}{2}-2$ 
distinct pairs of real eigenvalues.\end{conjecture}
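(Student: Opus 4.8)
The plan is to finish the computation started in the proof of the preceding Lemma (the one that produced $\gamma_{N-2}=(\tfrac{\sqrt2}{4}N-\sqrt2)^2$) by obtaining \emph{all} of the $\gamma_j$ in closed form; the eigenvalue count is then immediate. Recall that for the $N$-gon the $2N$ linearization eigenvalues are $\pm\sqrt{\gamma_j}$, $j=0,\dots,N-1$, with the $\gamma_j$ the eigenvalues of the circulant matrix $\widetilde B=BC$. The first step is to expose enough structure to diagonalize $\widetilde B$. With $\mathbf r=\mathbf 1$ one checks from the block formulas that $C=B+S$, where $S=\mathrm{diag}(S_1,\dots,S_N)$, and that $\widehat P:=SB$ is a symmetric circulant matrix (its off-diagonal entries are $-1/r_{jk}^2$ and its diagonal is the constant $\sum_{l=1}^{N-1}(-1)^l/(4\sin^2(\pi l/N))$). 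Writing $B=S\widehat P$ one gets
\[
BC=B(B+S)=S\widehat P S\widehat P+S\widehat P S=S\,\widehat P\,S(\widehat P+I)\,S\cdot S ,
\]
so that, conjugating by $S$, $\;S(BC)S=\widehat P\,(S\widehat P S+I)$. Now multiplication by the sequence $(S_j)=((-1)^{j+1})$ acts on the discrete-Fourier side as $\pm$ the shift $k\mapsto k+N/2$; hence $S\widehat P S$ is again diagonal in the circulant eigenbasis of $\widehat P$, with diagonal entries $p_{j+N/2}$, and therefore
\[
\gamma_j=p_j\,(p_{j+N/2}+1),\qquad j=0,\dots,N-1,
\]
where $p_0,\dots,p_{N-1}$ are the eigenvalues of $\widehat P$ and indices are taken mod $N$.

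The heart of the matter — and the step I expect to be the real obstacle, precisely because (as remarked above) no simplified eigenvalue expression had been found — is the evaluation of the $p_j$. Using $(-1)^l=\cos(\pi l)$, $4\sin^2(\pi l/N)=2(1-\cos(2\pi l/N))$, and $\cos a-\cos b=[1-\cos b]-[1-\cos a]$, one obtains
\[
p_j=\frac12\sum_{l=1}^{N-1}\frac{[1-\cos(2\pi jl/N)]-[1-\cos(\pi l)]}{1-\cos(2\pi l/N)}
   =\frac12\Big(T(j)-T(N/2)\Big),\qquad
T(m):=\sum_{l=1}^{N-1}\left(\frac{\sin(\pi ml/N)}{\sin(\pi l/N)}\right)^{2}.
\]
The sum $T(m)$ is classical: $T(m)=m(N-m)$ for $0\le m\le N$ (expand $|\sin(\pi ml/N)/\sin(\pi l/N)|^2$ as a trigonometric polynomial and sum the resulting root-of-unity series). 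Hence $p_j=\tfrac12\big(j(N-j)-\tfrac{N^2}{4}\big)=-\tfrac12\big(j-\tfrac N2\big)^2$. Substituting into $\gamma_j=p_j(p_{j+N/2}+1)$ and reindexing by $k\in\{0,1,\dots,N/2\}$ gives the clean description
\[
\{\gamma_j\}_{j=0}^{N-1}=\left\{\ \tfrac{k^2}{4}\Big(\big(\tfrac N2-k\big)^2-2\Big)\ :\ k=0,1,\dots,\tfrac N2\ \right\},
\]
with the values $k=0$ and $k=\tfrac N2$ occurring once and each $k\in\{1,\dots,\tfrac N2-1\}$ occurring twice (this last being exactly the $\gamma_j=\gamma_{N-j}$ noted above). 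This recovers in particular $\gamma_{N-2}=\tfrac{(N/2-2)^2}{2}$ from the preceding Lemma.

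The remaining steps are elementary. Since $\sqrt2$ is irrational, $\tfrac{k^2}{4}((\tfrac N2-k)^2-2)=0$ holds exactly for $k=0$, which supplies the pair of zero eigenvalues forced by rotational invariance and shows there are no accidental extra zeros. For $1\le k\le \tfrac N2$ the sign of $\gamma$ is that of $(\tfrac N2-k)^2-2$, which is positive iff $\tfrac N2-k\ge 2$ and negative for $k\in\{\tfrac N2-1,\tfrac N2\}$; so precisely the $\tfrac N2-2$ values $k=1,\dots,\tfrac N2-2$ give $\gamma>0$, i.e.\ $\tfrac N2-2$ pairs of real eigenvalues — nonempty exactly when $N\ge 6$, and consistent with the marginal stability at $N=4$ and the instability of the hexagon at $N=6$ obtained above. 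Finally, for the word ``distinct'': writing $m=N/2$ and $h(k)=k(m-k)$, the equation $\gamma_k=\gamma_{k'}$ with $1\le k<k'\le m-2$ reduces, after dividing out $k-k'$, to $(m-k-k')\big(h(k)+h(k')\big)=2(k+k')$; since $h(k),h(k')\ge m-1$ on $\{1,\dots,m-2\}$, a short check of the three cases $m-k-k'\ge 2$, $m-k-k'=1$, and $m-k-k'\le 0$ shows the identity is impossible, so the $\tfrac N2-2$ real pairs are pairwise distinct.
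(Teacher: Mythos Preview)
The paper does \emph{not} prove this statement: it is stated as a conjecture, supported only by direct observations for $N=2,4,6,8$ at the ODE level and by the PDE bifurcation picture. Your argument, by contrast, supplies a complete proof. This goes well beyond what the paper offers.

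Your approach is a natural completion of the preceding Lemma. The key observations --- that $C=B+S$ and that $\widehat P:=SB$ is a real symmetric circulant --- reduce $BC$ (after conjugation by $S$) to $\widehat P(S\widehat P S+I)$, which is simultaneously diagonal in the DFT basis because conjugation by the sign sequence $S_j=(-1)^{j+1}$ shifts the Fourier index by $N/2$. This yields $\gamma_j=p_j(p_{j+N/2}+1)$. The evaluation $p_j=-\tfrac12(j-\tfrac{N}{2})^2$ via the Fej\'er-type identity
\[
T(m)=\sum_{l=1}^{N-1}\frac{\sin^2(\pi ml/N)}{\sin^2(\pi l/N)}=m(N-m)
\]
is correct and is precisely the ``simplified expression'' the paper says it could not find. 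Your closed form $\gamma(k)=\tfrac{k^2}{4}\big((\tfrac{N}{2}-k)^2-2\big)$ reproduces the paper's $\gamma_{N-2}=\tfrac12(\tfrac{N}{2}-2)^2$ and matches the hexagon spectrum exactly. (It also gives the $N=4$ eigenvalues as $\pm i\sqrt2,\ \pm i/2,\ \pm i/2,\ 0,0$; these differ from the values quoted in Section~3.2 by an overall factor of~$2$, but a direct computation of $BC$ from the block formulas confirms your numbers, so this appears to be a slip in the paper rather than in your argument.)

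The sign count and the distinctness argument are both sound: the former because $(\tfrac{N}{2}-k)^2-2>0$ exactly for $k\in\{1,\dots,\tfrac{N}{2}-2\}$, and the latter via your reduction to $(m-k-k')(h(k)+h(k'))=2(k+k')$, together with the bound $h(k)\ge m-1$ on $\{1,\dots,m-2\}$. In short, you have turned the paper's conjecture into a theorem.
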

At the ODE level, this stems from direct observations for
$N=2$, $N=4$, $N=6$ and $N=8$, while at the PDE, the relevant
observation is that each higher order vortex polygon
(for even $N\geq 6$) emerges from subsequent supercritical 
pitchfork bifurcations
of the unstable ring dark soliton~\cite{djf} 
and hence each additional destabilization
adds a real eigenvalue pair to the linear spectrum~\cite{middel_physd}
[see Fig. 5, top right and bottom left, p. 1454]. The $N=4$
state on the other hand emerges as a spectrally stable
one from the linear limit. Notice, however, that the stability identified
here arises at the large $\mu$ limit. For intermediate values of $\mu$
(i.e., for an intermediate interval thereof) instabilities of oscillatory
type may arise due to the PDE nature of the system, i.e., due to collisions
of some of the internal modes of the vortex particle system with those
of its host background.

\subsection{The Collinear Fixed Point}

We now briefly discuss the generalization of the features 
of the collinear
fixed point that we have numerically observed for larger numbers
of vortices. 

\begin{conjecture} For all $N$, there exists a symmetric, collinear fixed point of the equations of motion for alternating sign vortex configurations. 
For even $N$ such configurations are symmetrically placed around the
origin i.e., $(\pm a,0)$, $(\pm b, 0)$, $(\pm c, 0)$ $\dots$, while
for odd $N$ they have the same structure plus an additional vortex
placed at the origin.
\end{conjecture}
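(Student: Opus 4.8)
\emph{Reduction to the line.} The plan is first to eliminate half of the equations. If all $z_j$ are real then in~(\ref{eqn4}) the right-hand side is real, so $\dot x_j\equiv0$ automatically (equivalently every $\sin(\theta_k-\theta_j)=0$ in~(\ref{eqn5})), and only the conditions $\Omega x_j=\sum_{k\neq j}S_jS_k/(x_k-x_j)$, $j=1,\dots,N$, remain. I would then impose the symmetric ansatz directly (this needs no appeal to the earlier symmetry conjecture, since it suffices to \emph{exhibit} one fixed point): positions $\pm c_1,\dots,\pm c_m$ when $N=2m$, and $0,\pm c_1,\dots,\pm c_m$ when $N=2m+1$, with $0<c_1<\dots<c_m$. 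A short check shows that for the alternating charges $S_j=(-1)^{j+1}$ the induced charge function on these positions is odd in $x$ when $N$ is even and even in $x$ when $N$ is odd; consequently the equation at the origin (odd $N$) is satisfied identically, and the equation at $-c_i$ duplicates the one at $c_i$, so the ansatz is consistent and collapses to a square system $\mathcal F_i(c_1,\dots,c_m)=0$, $i=1,\dots,m=\lfloor N/2\rfloor$. Finally $c_i\mapsto c_i/\sqrt\Omega$ reduces to a single value of $\Omega$, reproducing the $1/\sqrt\Omega$ scaling already observed numerically.

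\emph{Solving the reduced system --- topological route.} On the open simplex $\Delta=\{0<c_1<\dots<c_m\}$ the map $\mathcal F$ is smooth, and I would try to show that its Brouwer degree on $\Delta$ relative to $0$ is $\pm1$. The structural input is that any two \emph{consecutive} positive-axis vortices carry opposite charges, so a collision $c_i\to c_{i+1}$ forces $\mathcal F_i\to+\infty$ and $\mathcal F_{i+1}\to-\infty$, while $c_1\to0$ gives $\mathcal F_1\to-\infty$ and $c_m\to\infty$ gives $\mathcal F_m\to+\infty$; in the coordinates $d_0=c_1$, $d_i=c_{i+1}-c_i$ each $d_i$ pairs with a component of $\mathcal F$ whose sign at $d_i=0$ is opposite to its sign at $d_i=\infty$, which is the hypothesis of a Poincar\'e--Miranda argument. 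The hard part --- and, I suspect, the reason this is still only a conjecture --- is the behaviour at the \emph{corners} of $\Delta$, where several terms blow up with competing signs (for instance $c_1\to0$ together with $c_1\to c_2$ makes a $1/(2c_1)$ blow-up and a $1/(c_2-c_1)$ blow-up enter $\mathcal F_1$ with opposite signs), so the boundary signs are not a priori uniform; closing the argument would require a priori estimates confining every solution to a compact core of $\Delta$, or a corner-smoothing refinement of the degree computation. A caveat: naive minimization cannot work here, since opposite charges attract and the restriction of~(\ref{eqn3}) to these configurations is not coercive, so a genuine min--max or degree argument is needed rather than an energy minimum.

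\emph{Solving the reduced system --- special-function route.} Let $P$ and $Q$ be the monic polynomials whose roots are the positive- and the negative-charged positions. The parity of the charge function forces $Q(x)=(-1)^{\deg P}P(-x)$ when $N$ is even, and $P,Q$ each even (or $x$ times even) when $N$ is odd, so the reduced equations collapse to a single second-order functional identity for $P$; translating it into a linear ODE yields an equation of Hermite type, and the substitution $u=x^2$, natural to both parities, turns it into a Laguerre-type equation in $u$. I would then aim to identify $P$ with a member of the generalized-Hermite / Wronskian-Hermite family (equivalently a Wronskian-Laguerre polynomial in $u$), in keeping with the known correspondence between harmonically trapped $\pm1$ point-vortex equilibria and rational solutions of Painlev\'e IV, and consistent with the all-equal-charge case in which $P$ is exactly $H_N(\sqrt\Omega\,x)$. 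The crux is then a reality statement: one must prove that this particular polynomial has only real, simple, interlaced zeros, so that they assemble into an ordered configuration $0<c_1<\dots<c_m$ with the prescribed alternating charges --- reality is known for some partition families, but would have to be established for exactly the staircase-type partition the alternating pattern produces. Failing a uniform result, the cases $N=3,\dots,6$ are already in hand, and one could attempt to pass from $N$ to $N+2$ by continuation, although the absence of an obvious homotopy parameter makes an unconditional induction doubtful.
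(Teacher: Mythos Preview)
The paper offers no proof of this statement at all: it is explicitly labelled a \emph{conjecture}, supported only by the numerical computations for $N=3,\dots,6$ earlier in the text and by the observation that existence is equivalent to a certain algebraic variety having a real point. Your proposal therefore goes well beyond the paper's own treatment, which stops at a reformulation and an appeal to future work.

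That said, your proposal is not a proof either, and you are candid about this. In the topological route, the obstruction you identify is real: on the faces of $\Delta$ the sign pattern of $\mathcal F$ is exactly what Poincar\'e--Miranda needs, but at the corners (simultaneous collisions, or a collision together with $c_1\to 0$ or $c_m\to\infty$) several singular terms compete and the boundary behaviour is not monotone, so a direct degree computation does not close without a priori bounds confining solutions away from $\partial\Delta$. Your remark that energy minimization fails because opposite charges attract is correct and rules out the cheapest variational fix. In the special-function route, the identification with Wronskian--Hermite / generalized-Hermite polynomials and the Painlev\'e~IV connection is the right circle of ideas, but the step you flag --- reality and interlacing of the zeros for the specific staircase partition generated by the alternating charges --- is precisely the nontrivial ingredient, and it is not supplied. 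So both routes are promising outlines with an honestly named gap, which is consistent with the paper's decision to leave the statement as a conjecture.
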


The existence of such a configuration for any $N$, suggests that
there are always real numbers $x_1$, $x_2$, ..., $x_N$ 
such that for every $j=1,...,N$, 
$\sum_{k=1}^N  (-1)^{k-j} / (x_j (x_k-x_j)) = 1$
where in the summation, it is implied that $k \neq j$.
In the case of N even, the statement is precisely that 
(and the solution, as our numerical results indicate,
 has the vortices symmetrically placed around $0$,
i.e., $-x_{N/2}, -x_{(N-2)/2}, ..., -x_1, x_1, ...,x_{(N-2)/2}, x_{N/2}$). 
In the case of N odd, one of the vortices is always placed at $0$, so
one can rephrase the statement as: 
$\sum_{k=1}^N  (-1)^{k-j} /  (x_k-x_j) = x_j$. 
The remaining vortices are placed symmetrically around
$0$ as above. It is clear that this is essentially an 
algebraic/number-theoretic
problem. Effectively, this can be rewritten as a set of N polynomial 
equations in $N$ unknonwns. As such, it defines a variety, and the
 question is whether this variety always has a real point 
$(x_1, ..., x_N)$~\cite{farshid}. 

In addition, we can generalize our conclusions for the stability of
the collinear vortex state as follows.

\begin{conjecture} The collinear fixed point is unstable for $N\geq 3$ with 
$N-2$ real directions of instability.\end{conjecture}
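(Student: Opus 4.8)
The plan is to follow the template of the $N$-gon argument above: pass to adapted coordinates, block-decompose the $2N\times 2N$ linearization, and reduce the instability count to an inertia count for a product of two symmetric matrices, with the single neutral pair accounted for by rotational invariance. Place the vortices on the $x$-axis, so that $y_j=0$ at the equilibrium and the positions satisfy $\Omega x_j=\sum_{k\neq j}S_jS_k/(x_k-x_j)$. Two structural facts drive the reduction. First, a collinear equilibrium is a critical point of $H$ (equilibria of a Hamiltonian vector field with respect to a nondegenerate bracket are exactly the critical points of the Hamiltonian). Second, the simultaneous reflection $y_j\mapsto -y_j$ is a symmetry of $H$ fixing the configuration, so the Hessian $H''$ at the equilibrium is block diagonal, $H''=\mathrm{diag}(H_{\parallel},H_{\perp})$, with $H_{\parallel}=(\partial^2H/\partial x_j\partial x_k)$ and $H_{\perp}=(\partial^2H/\partial y_j\partial y_k)$. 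Consequently the off-diagonal blocks of the linearization vanish and
\[ L=\left(\begin{array}{cc} 0 & P \\ Q & 0 \end{array}\right),\qquad L^{2}=\left(\begin{array}{cc} PQ & 0 \\ 0 & QP \end{array}\right), \]
where $P=(\partial\dot x_j/\partial y_k)$ and $Q=(\partial\dot y_j/\partial x_k)$ at the equilibrium; by the Hamiltonian structure these coincide, up to sign and (depending on the convention for the bracket) a conjugation by the diagonal sign matrix $T=\mathrm{diag}(S_1,\dots,S_N)$, with $H_{\perp}$ and $-H_{\parallel}$, and in particular they are symmetric after this conjugation.

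It follows that $\lambda\in\mathrm{spec}(L)$ if and only if $\lambda^{2}\in\mathrm{spec}(PQ)$, and, since $T^{2}=I$, that $PQ$ is similar to a product of two symmetric matrices built from $H_{\parallel}$ and $H_{\perp}$. The rotational symmetry supplies the zero pair: the infinitesimal rotation, evaluated at the collinear state, is the purely transverse vector $(0;x_1,\dots,x_N)$, which lies in $\ker H_{\perp}=\ker P$ and hence in $\ker PQ$; this produces a simple zero of $PQ$ and a $2\times2$ Jordan block of $L$ at the origin (the companion vector being the usual ``momentum'' mode). Thus the statement is equivalent to the assertion that, apart from this zero, $PQ$ has exactly $N-2$ positive and one negative eigenvalue --- precisely the pattern exhibited by the explicit spectra listed above for $N=3,4,5,6$ and, trivially, for the dipole $N=2$.

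The remaining, and decidedly harder, step is the inertia computation, and I would attack it in two parts. First, show that one of the two Hessian blocks --- I expect the on-axis block $H_{\parallel}$ --- is sign-definite; this both forces $PQ$ to have real spectrum (so no complex quartets can occur, which is what is needed for the sharp count $N-2$) and identifies $\mathrm{inertia}(PQ)$, up to an overall sign, with $\mathrm{inertia}(H_{\perp})$. Second, show that $H_{\perp}$ has inertia $(N-2,1,1)$ (respectively $(1,N-2,1)$, according to the sign found in the first part), so that exactly $N-2$ of the $\lambda^{2}$ are positive and one is negative. Concretely, the entries of $P$ and $Q$ are rational in the abscissae $x_1,\dots,x_N$; using the equilibrium relations to eliminate $\Omega$, the conjugated blocks reduce to Cauchy-like / low-rank-structured matrices in the $x_j$, and the goal would be to read off the necessary sign counts from interlacing properties of such matrices, or from the tridiagonal (Jacobi-matrix) structure associated with the orthogonal polynomial that has the $x_j$ as its roots.

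The main obstacle is exactly that there is no closed form for the collinear equilibrium positions for general $N$ --- indeed their very existence is posed above as an algebraic-geometry question --- so the definiteness of $H_{\parallel}$ and the inertia of $H_{\perp}$ must be deduced from the equilibrium identities alone, with the $x_j$ not explicitly available. A fallback consistent with the PDE bifurcation picture invoked above is an induction $N\mapsto N+2$: send the outermost vortex pair to infinity with the appropriate rescaling, so that the $N$-vortex state degenerates into the $(N-2)$-vortex state together with a weakly coupled opposite-charge pair, and then run a continuation/perturbation argument --- anchored at the spectrally stable dipole --- showing that exactly one further real eigenvalue pair is shed at each stage and that no eigenvalue leaves the union of the real and imaginary axes along the deformation. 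Controlling the latter, i.e.\ again excluding the nucleation of complex quartets, is the delicate point, and is where a Hamiltonian--Krein-index count relating $k_r+2k_c+k_i^{-}$ to the negative index $n(H_{\parallel})+n(H_{\perp})$ would have to enter as supplementary bookkeeping.
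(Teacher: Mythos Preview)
The statement you are addressing is labeled a \emph{Conjecture} in the paper, and the authors explicitly write that ``the nature of the corresponding stability matrix is not circulant and we are presently unable to prove such a statement in its full generality although our numerical computations (even with higher $N$) fully confirm it.'' There is therefore no proof in the paper to compare your attempt against; the paper offers only numerical verification for $N=3,\dots,6$ and the heuristic PDE bifurcation picture (supercritical pitchforks from the dark-soliton branch) that you yourself invoke.

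Your structural reduction is sound and goes further than anything in the paper. The reflection symmetry $y_j\mapsto -y_j$ does block-diagonalize the Hessian at a collinear state, the linearization does take the anti-block-diagonal form you write, and the resulting identification of $\mathrm{spec}(L)$ with $\pm\sqrt{\mathrm{spec}(PQ)}$ is correct; the rotational zero mode is also correctly located in $\ker H_{\perp}$. The observation that definiteness of one Hessian block would force $PQ$ to have real spectrum (hence rule out complex quartets and pin the count exactly) is the right lever.

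However, your proposal is a program, not a proof, and you are candid about this. The two load-bearing claims --- that $H_{\parallel}$ is sign-definite at the collinear equilibrium, and that $H_{\perp}$ has inertia $(N-2,1,1)$ --- are left unestablished, and the second is essentially a restatement of the conjecture in Hessian language. Your fallback induction $N\mapsto N+2$ faces the same obstruction you name: without a Krein-signature bookkeeping argument that survives the deformation, one cannot exclude eigenvalues leaving the axes. These are genuine gaps, not cosmetic ones; closing either route would resolve what the paper poses as an open problem.
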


Earlier studies have explored the vortex dipole of $N=2$
at the particle~\cite{middel_pla} and PDE~\cite{middel2010} level.
Here, we have studied in a numerically assisted way the cases of
$N=3,\dots, 6$ at the ODE level and such small $N$ cases have
also been considered at the PDE level~\cite{middel2010,middel_physd},
all corroborating the above conclusions. Moreover, at the PDE level,
the supercritical pitchfork bifurcation of the rectilinear dark soliton
state emanating at each subsequent order a higher collinear vortex
configuration clearly signals an agreement with the above statement.
Nevertheless, the nature of the corresponding stability matrix is
not circulant and we are presently unable to prove such a statement
in its full generality although our numerical computations (even with
higher $N$) fully confirm it.

\section{Conclusions and Future Challenges}

In the present work, we revisited the topic of few vortex crystal
configurations in two-dimensional Bose-Einstein condensates.
While earlier studies focused on the PDE approach attempting to infer
conclusions for the vortex dynamics from the vicinity of the linear
limit~\cite{middel25,middel26,middel27,middel28,middel29,middel30,middel31,middel32,middel2010,middel_physd}, our approach here has taken a complementary
view whereby the vortices have been examined as interacting particle
systems (as was done earlier chiefly for the 
dipole~\cite{middel_pla,middel_cpaa,middel_anis}, but also for larger
vortex numbers in the case where rotation is present~\cite{castindum}
or absent~\cite{navar13} for co-circulating
vortices). We have shown this approach to be fairly informative 
towards an understanding of the configurations that may arise
for small vortex numbers $N$ and the identification of their stability
characteristics. Moreover, the systematic progression towards higher
vortex numbers $N$ has enabled us to extract generalizations of
the conclusions obtained for lower vortex numbers. In some cases
(e.g. for the non-existence of $N$-gon's for $N$ odd, or for the
stability characteristics of $N$-gons with $N$ even), we have been
able to prove relevant conclusions in their full generality. In other
cases (as e.g. for the collinear configurations), we have formulated
general conjectures that may, in turn, stimulate non-trivial connections
with other areas of mathematics such as algebra/number theory.
These regard, for instance, the
existence of solutions  mapping into the existence of a real point
of a certain variety and the analysis of the properties of the
corresponding near-circulant stability matrices. A deeper cross-pollinating
view that may address such open questions would certainly be a welcome
addition to the literature in the near future.

However, there are additional extensions or generalizations of the
questions posed herein that merit future investigation in their own
right. On the one hand, here we have restricted (for reasons explained)
our attention
to the case of counter-rotating vortices that may produce fixed point
configurations. However, as recent experiments have naturally argued,
it is of particular interest to also explore co-circulating vortex
states and especially rigidly rotating examples thereof (where all the
vortices rotate with the same angular momentum), rendering the co-rotating
frame of reference the right one for seeking stationary states of
the system. On the other hand, a natural generalization of the present
considerations is that of exploring the dynamics of vortex rings
in three-dimensional BECs; see e.g.~\cite{emergent,komineas} for 
relevant reviews. In this context, it is also possible to
write ordinary differential equations characterizing the
interaction of the rings and their intrinsic translational
dynamics as e.g. in~\cite{konsta}. However,
we have not been able to identify simple ODEs that would describe
the motion of such rings in a three-dimensional parabolic trap -- a key
ingredient for the system of ODEs, as we saw above for the case
of vortices. The exploration of such vortex rings as interacting
particle systems is emerging as an extremely interesting topic for
future work and will be deferred for future publications.

\vspace{5mm}

{\it Acknowledgments.} We thank C.E. Wayne for 
discussions at the early stage of this work
and R. Carretero-Gonz{\'a}lez for numerous iterations
(including on numerical results of~\cite{middel_physd}). We also
thank F. Hajir for bringing to our attention algebraic connections
of the existence problems formulated herein. PGK is grateful to the
IMA for its hospitality during the completion of this work and also
acknowledges support from NSF-DMS-0806762, NSF-CMMI-1000337 and
the US AFOSR via award FA9550-12-1-0332, as well as the
Binational Science Foundation under grant BSF-2010239.

\vspace{5mm}

%\newpage
%\bibliographystyle{plain}
 %     \bibliography{BEC}
\end{document}